\pgfplotsset{compat=1.5}
\newtheorem{theorem}{Theorem}[section]
\newtheorem{lem}[theorem]{Lemma}
\newtheorem{defn}[theorem]{Definition}
\newtheorem{asm}[theorem]{Assumption}
\newtheorem{rem}[theorem]{Remark}
\newenvironment{proofof}[1]{\begin{trivlist} \item {\bf Proof
#1:~~}}
  {\qed\end{trivlist}}
\newcommand{\namedref}[2]{\hyperref[#2]{#1~\ref*{#2}}}
\newcommand{\eps}{\varepsilon}
\DeclareMathOperator*{\poly}{poly}
\newcommand{\ignore}[1]{}
\newif\ifnotes\notestrue 
\newcommand{\samson}[1]{\textcolor{purple}{{\bf (Samson:} {#1}{\bf ) }} \marginpar{\tiny\bf
             \begin{minipage}[t]{0.5in}
               \raggedright S:
            \end{minipage}}}            							
\newcommand{\samson}[1]{}
\renewcommand*{\@fnsymbol}[1]{\textcolor{mahogany}{\ensuremath{\ifcase#1\or *\or \dagger\or \ddagger\or
 \mathsection\or \triangledown\or \mathparagraph\or \|\or **\or \dagger\dagger
   \or \ddagger\ddagger \else\@ctrerr\fi}}}
\providecommand{\email}[1]{\href{mailto:#1}{\nolinkurl{#1}\xspace}}
\definecolor{mahogany}{rgb}{0.75, 0.25, 0.0}
\definecolor{bleudefrance}{rgb}{0.19, 0.55, 0.91}
\DeclarePairedDelimiterX{\inp}[2]{\langle}{\rangle}{#1, #2}
\DeclarePairedDelimiterX{\infdivx}[2]{(}{)}{%
  #1\;\delimsize\|\;#2%
}
\numberwithin{equation}{section}
\begin{document}

\title{Improved Algorithms for White-Box Adversarial Streams}
\author{Ying Feng\thanks{Carnegie Mellon University. 
E-mail: \email{yingfeng@andrew.cmu.edu}}
\and
David P. Woodruff\thanks{Carnegie Mellon University. 
E-mail: \email{dwoodruf@andrew.cmu.edu}}
}
\date{\today}

\maketitle

\begin{abstract}
We study streaming algorithms in the white-box adversarial stream model, where the internal state of the streaming algorithm is revealed to an adversary who adaptively generates
the stream updates, but the algorithm obtains fresh randomness unknown to the adversary at each time step. We incorporate cryptographic assumptions to construct robust algorithms against such adversaries. We propose efficient algorithms for sparse recovery of vectors, low rank recovery of matrices and tensors, as well as low rank plus sparse recovery of matrices, i.e., robust PCA. Unlike deterministic algorithms, our algorithms can report when the input is not sparse or low rank even in the presence of such an adversary. We use these recovery algorithms to improve upon and solve new problems in numerical linear algebra and combinatorial optimization on white-box adversarial streams. For example, we give the first efficient algorithm for outputting a matching in a graph with insertions and deletions to its edges provided the matching size is small, and otherwise we declare the matching size is large. We also improve the approximation versus memory tradeoff of previous work for estimating the number of non-zero elements in a vector and computing the matrix rank.
\end{abstract}

\newpage

\section{Introduction}
The streaming model captures key resource requirements of algorithms for database, machine learning, and network tasks, where the size of the data is significantly larger than the available storage, such as for internet and network traffic, financial transactions, simulation data, and so on. This model was formalized in the work of Alon, Matias, and Szegedy \cite{10.1145/237814.237823}, which models a vector undergoing additive updates to its coordinates. Formally, there is an underlying $n$-dimensional vector $x$, which could be a flattened matrix or tensor, which is initialized to $0^n$ and evolves via an arbitrary sequence of $m \leq \poly(n)$ additive updates to its coordinates. These updates are fed into a streaming algorithm, and the $t$-th update has the form $(i_t, \delta_t)$, meaning $x_{i_t} \leftarrow x_{i_t} + \delta_t$. Here $i_t \in \{1, 2, \ldots, n\}$ and $\delta_t \in \{-M, -M+1,\ldots , M-1, M\}$ for an $M \leq \poly(n)$\footnote{All bounds can be generalized to larger $m$ and $M$; this is only for convenience of notation. Also, our streaming model, which allows for both positive and negative updates, is referred to as the (standard) turnstile streaming model in the literature.}. Throughout the stream, $x$ is promised to be in $\{-M, -M+1,\ldots , M-1, M\}^n$.
A streaming algorithm makes one pass over the stream of updates and uses  limited memory to approximate a function of $x$.

A large body of work on streaming algorithms has been designed for {\it oblivious} streams, for which the sequence of updates may be chosen adversarially, but it is chosen independently of the randomness of the streaming algorithm. In practical scenarios, this assumption may not be reasonable; indeed, even if the stream is not generated by an adversary this may be problematic. For example, if one is running an optimization procedure, then one may feed future data into a streaming algorithm based on past outputs of that algorithm, at which point the inputs depend on the algorithm's randomness and there is no guarantee of correctness. This is also true in recommendation systems, where a user may choose to remove suggestions based on previous queries. 

There is a growing body of work on streaming algorithms that are robust in the black-box adversarial streaming model \cite{Ben-EliezerY20,Ben-EliezerJWY21,HassidimKMMS20,WoodruffZ21,AlonBDMNY21,KaplanMNS21,BravermanHMSSZ21,MenuhinN21,AttiasCSS21,Ben-EliezerEO22,ChakrabartiGS22}, in which the adversary can monitor only the output of the streaming algorithm and choose future stream updates based on these outputs. While useful in a number of applications, there are other settings where the adversary may also have access to the internal state of the algorithm, and this necessitates looking at a stronger adversarial model known as white-box adversarial streams. 

\subsection{The White-box Adversarial Streaming Model}

We consider the white-box adversarial streaming model, introduced in \cite{10.1145/3517804.3526228}, where a sequence of stream updates $u_1,\ldots,u_m$ is chosen adaptively by an adversary who sees the full internal state of the algorithm at all times, including the parameters and the previous randomness used by the algorithm.

\begin{defn}[White-box Adversarial Streaming Model] Consider a single-pass, two-player game between \textsf{Streamalg}, the streaming algorithm, and \textsf{Adversary}. Prior to the beginning of the game, fix a query $\mathcal{Q}$, which asks for a function of an underlying dataset that will be implicitly defined by the stream, which is itself chosen by \textsf{Adversary}.
    The game then proceeds across $m$ rounds, where in the $t$-th round:
    
    \begin{enumerate}[label=\arabic*.]
    \normalfont{\item \textsf{Adversary} computes an update $u_t$ for the stream, which depends on all previous stream updates, all previous internal states, and randomness used by \textsf{Streamalg} (and thus also, all previous outputs of \textsf{Streamalg}).  }
    
    \normalfont{\item \textsf{Streamalg} acquires a fresh batch $R_t$ of random bits, uses $u_t$ and $R_t$ to update its data structures $D_t$, and (if asked) outputs a response $A_t$ to the query $\mathcal{Q}$.}
    
    \normalfont{\item \textsf{Adversary} observes the response $A_t$, the internal state $D_t$ of \textsf{Streamalg}, and the random bits $R_t$.}
    \end{enumerate}
    
The goal of \textsf{Adversary} is to make \textsf{Streamalg} output an incorrect response $A_t$ to the query $\mathcal{Q}$ at some time $t\in[m]$ throughout the stream.
\end{defn}

{\bf Notation:}
    A function $f(n)$ is said to be {\it negligible} if for every polynomial $P(n)$, for all large enough $n$, $f(n) < \frac{1}{P(n)}$. We typically denote negligible functions by negl$(n)$.

\vspace{0.2em}
Given a fixed time bound $\mathcal{T}$, we say a streaming algorithm is robust against $\mathcal{T}$ time-bounded white-box adversaries if no $\mathcal{T}$ time-bounded white-box adversary can win the game with non-negligible probability against this algorithm.

\subsubsection{Applications of white-box adversaries}

The white-box adversarial model captures characteristics of many real-world attacks, where an adaptive adversary has access to the entirety of the internal states of the system. In comparison to the oblivious stream model or the black-box adversarial model \cite{Ben-EliezerJWY21}, this model allows us to model much richer adversarial scenarios.

For example, consider a distributed streaming setting where a centralized server collects statistics of a database generated by remote users. The server may send components of its internal state $S$ to the remote users in order to optimize the total communication over the network. The remote users may use $S$ in some process that generates downstream data. Thus, future inputs depend on the internal state $S$ of the streaming algorithm run by the central coordinator. In such settings, the white-box robustness of the algorithms is crucial for optimal selection of query plans~\cite{SelingerACLP79}, online analytical processing~\cite{ShuklaDNR96, PadmanabhanBMCH03}, data integration~\cite{BrownHMPRS05}, and data warehousing~\cite{DasuJMS02}.

Many persistent data structures provide the ability to quickly access previous versions of information stored in a repository shared across multiple collaborators. The internal persistent data structures used to provide version control may be accessible and thus visible to all users of the repository. These users may then update the persistent data structure in a manner that is not independent of previous states~\cite{DriscollSST89,FiatK03,Kaplan04}.

Dynamic algorithms often consider an adaptive adversary who generates the updates upon seeing the entire data structure maintained by the algorithm during the execution~\cite{Chan10,ChanH21,RoghaniSW22}. For example, \cite{Wajc20} assumes the entire state of the algorithm (including the set of randomness) is available to the adversary after each update, i.e., a white-box model.

Moreover, robust algorithms and adversarial attacks are important topics in machine learning~\cite{SzegedyZSBEGF14,GoodfellowSS14}, with a large body of recent literature focusing on adversarial robustness of machine learning models against white-box attacks~\cite{IlyasEM18,MadryMSTV18,SchmidtSTTM18,TramerKPGBM18,CubukZMVL18,KurakinGB17,LiuCLS17}. There exist successful attacks that use knowledge of the trained model; e.g., the weights of a linear classifier
to minimize the loss function (which are referred to as Perfect Knowledge adversaries in \cite{BiggioCMNSLGR13}). There are also white-box attacks that use the architecture and parameters of a trained neural network policy to generate adversarial perturbations that are almost imperceptible to the human eye but result in misclassification by the network \cite{HuangPGDA17}. In comparison to the black-box adversarial streaming model, in which the input is chosen by an adversary who repeatedly queries for only a fixed property of the underlying dataset at each time but does not see the full internal state of the algorithm during execution, the white-box model more effectively captures the full capability of these attacks.

\subsection{Random Oracle Model}

In order to construct streaming algorithms based on the hardness of the SIS problem, we need access to a fixed uniformly random matrix during the stream. 
In this paper, we consider algorithms in the {\it random oracle model}, which means that the algorithms, as well as the white-box adversaries, are given read access to an arbitrarily long string of random bits. Each query gives a uniformly random value from some output domain and repeated queries give consistent answers. The random oracle model is a well-studied model and has been used to design
numerous cryptosystems \cite{10.1145/168588.168596, 10.1007/3-540-68339-9_34, 10.1145/1008731.1008734, cryptoeprint:2015/140}. Also, such a model has been used to design space-efficient streaming algorithms, for both oblivious streams \cite{Clifford2013ASS, 10.1145/3596494} as well as adversarial settings \cite{10.1145/3517804.3526228, 10.1145/3375395.3387658}. In the random oracle model, instead of storing large random sketching matrices during the stream, the streaming algorithms can
generate the columns of the matrix on the fly when processing updates. Also, in the distributed setting, the servers will be able to agree on a random sketching matrix without having to communicate it. 

Such an oracle is often implemented with hash-based heuristic functions such as AES or SHA256. These implementations are appealing since they behave, as far as we can tell in practice, like random functions. They are also extremely fast and incur no memory cost.

Another approach is to use a {\it pseudorandom function} as a surrogate.

\begin{defn}[Pseudorandom Function]
\label{def: prf}
Let $A, B$ be finite sets, and let $\mathcal{F} = \{F_i: A \rightarrow B\}$ be a function family, endowed with an efficiently sampleable distribution. We say that $\mathcal{F}$ is a pseudorandom function (PRF) family if all functions $F_i$ are efficiently computable and the following two games are computationally indistinguishable:

    \begin{enumerate}[label=(\roman*), ref=(\roman*)]
        \item\label{case1} Sample a function $F \stackrel{\$}\gets \mathcal{F}$ and give the adversary adaptive oracle access to $F(\cdot)$.
        
        \item\label{case2} Choose a uniformly random function $U: A \rightarrow B$ and give the adversary adaptive oracle access to $U(\cdot)$.
    \end{enumerate}
\end{defn}

Given a random key to draw $F$ from $\mathcal{F}$, a pseudorandom function provides direct access to a deterministic sequence of pseudorandom bits. This pseudorandom bit sequence can be seen as indexed by indices in $A$. Moreover, the key size can be logarithmically small with respect to the function domain (though in our algorithms we only need the key size to be polynomially small).

In this work, we design algorithms based on hardness assumptions of lattice cryptographic problems. In particular, we use the Short Integer Solution (SIS) Problem; see Section \ref{sec:prelim} for the precise cryptographic assumptions we make. There are many existing schemes to construct families of pseudorandom functions based on cryptographic assumptions \cite{10.1145/6490.6503, prfFromLattice, PRFthesis}. Therefore, if we assume the hardness of the SIS problem against $\mathcal{T}$ time-bounded adversaries, then we can construct families of pseudorandom functions. Moreover, if a function $F \stackrel{\$}\gets \mathcal{F}$ is sampled privately from any of these families $\mathcal{F}$, then $F$ behaves just like a random oracle from the perspective of any $\mathcal{T}$ time-bounded adversary. 

However, in the white-box adversarial setting, the process of choosing $F \stackrel{\$}\gets \mathcal{F}$ is revealed to the adversary. So the adversary can distinguish cases \ref{case1} and \ref{case2} in Definition \ref{def: prf} by simply comparing the output with $F$. It may be possible to use a pseudorandom function in place of a random oracle in our algorithms if one can resolve the following question:

\vspace{0.5em}
{\it
Let $\mathcal{F}$ be a family of pseudorandom functions, constructed based on the SIS problem. Consider a one-round, two player game between \textsf{Challenger} and \textsf{Adversary}:}

\begin{enumerate}[label=\arabic*.]
    \item \textsf{Challenger} samples a pseudorandom function $F \stackrel{\$}\gets \mathcal{F}$ based on some random key $\mathcal{K}$, and reveals $\mathcal{K}$ to \textsf{Adversary}.

    \item \textsf{Challenger} uses the pseudorandom bits generated by $F$ to sample an instance $\mathcal{I}$ of the SIS problem, with hardness parameter $n = \lvert \mathcal{K} \rvert$.

    \item \textsf{Adversary} attempts to solve $\mathcal{I}$.
\end{enumerate}

{\it Assuming that no $\mathcal{T}$ time-bounded adversary can solve the SIS problem with non-negligible probability, does there exist a $\mathcal{T}$ time-bounded adversary that can win this game with non-negligible probability, for a fixed time bound $\mathcal{T}$?
}

\vspace{0.5em}
The answer to this question depends on the specific PRF construction that we use. One may be able to artificially construct a family of SIS-based PRFs and show that the pseudorandomness generated by such PRF induces an easy variant of the SIS problem. However, using other PRF constructions, the SIS problem could potentially retain its difficulty. We leave the question of removing our random oracle assumption as an interesting direction for future work. 

\subsection{Our Contributions}

 Table \ref{tab:sum} summarizes our contributions. Specifically, we construct sparse recovery schemes for vectors, low rank plus sparse recovery schemes for matrices, and low rank recovery schemes for tensors, and apply these as building blocks to solve a number of problems in the white-box adversarial streaming model. Our algorithms either improve the bounds of existing algorithms, often optimally, or solve a problem for which previously no known white-box adversarial streaming algorithm was known.

\tabcolsep=0.11cm
\begin{table*}[t]
    \caption{A summary of the bit complexities of our algorithms, as compared to the best known upper bounds for these problems in the white-box adversarial streaming model. Dash means that we provide the first algorithm for the problem in the white-box stream model. For $k$-sparse recovery, we require $k \geq n^c$ for an arbitrarily small constant $c>0$.}
    \label{tab:sum}
    \vskip 0.15in
    \begin{center}
    \begin{small}
    \begin{sc}
    \scalebox{0.92}{
    \begin{tabular}{lcccr}
    \toprule
    Problem & Previous Space & Our Space & Note \\
    \midrule
    K-sparse recovery & -- & $\tilde{\mathcal{O}}(k)$ & detects dense input\\
    L$_0$-norm estimation & $\tilde{\mathcal{O}}(n^{1-\eps+c\eps})$ & $\tilde{\mathcal{O}}(n^{1-\eps})$ & achieves $n^{\epsilon}$-approximation \\
    Low-rank matrix recovery & -- & $\tilde{\mathcal{O}}(nk)$ & detects high-rank input\\
    Low-rank tensor recovery & -- & $\tilde{\mathcal{O}}(k(n_1+\cdots+n_d))$ & detects high CP-rank input\\
    Robust PCA & -- & $\tilde{\mathcal{O}}(nk+r)$ & Detects Not Sparse + Low Rank\\
    Rank-decision & $\tilde{\mathcal{O}}(nk^2)$ & $\tilde{\mathcal{O}}(nk)$ & detects high-rank input\\
    Maximum matching & -- & $\tilde{\mathcal{O}}(nk)$ & Detects Large matching size\\
    \bottomrule
    \end{tabular}}
    \end{sc}
    \end{small}
    \end{center}
    \vskip -0.1in
\end{table*}

\subsubsection{Recovery Algorithms}

We start by giving recovery algorithms for $k$-sparse vectors and rank-$k$ matrices in the white-box adversarial streaming model, which reconstruct their input provided that it satisfies the sparsity or rank constraint. Our algorithms crucially have the property that {\it they can detect if their input violates the sparsity or rank constraint}. 

Our algorithms make use of hardness assumptions of the Short Integer Solution (SIS) Problem and hold against polynomial (and sometimes larger) time adversaries. See Section \ref{sec:prelim} for the precise cryptographic assumptions we make. 
Informally, we have
Theorem \ref{thm:k-rec-1}, Theorem \ref{thm:m-rec-1}, Theorem 
\ref{thm:rpca-1}, and Theorem
\ref{thm:tensor} below. 

\begin{theorem}
\label{thm:k-rec-1}
    Assuming the exponential hardness of the SIS problem, there exists a white-box adversarially robust streaming algorithm which determines if the input vector is $k$-sparse, for parameter $k \geq n^c$ for an arbitrarily small constant $c > 0$, and if so, recovers a $k$-sparse vector using $\tilde{\mathcal{O}}({k})$ bits\footnote{Here and throughout, $\tilde{\mathcal{O}}(f)$ denotes $f \cdot \poly(\log n)$, with $n$ defined in our description of the streaming model.} of space in the random oracle model.
\end{theorem}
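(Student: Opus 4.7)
The plan is to sketch $y = Ax \bmod q$, where $A \in \mathbb{Z}_q^{m \times n}$ is a uniformly random SIS matrix with $m = \tilde{\mathcal{O}}(k)$ rows and $q = \poly(n)$, sampled and accessed lazily through the random oracle so that $A$ itself is never stored. A single column $A_{\cdot,i}$ occupies $m \log q = \tilde{\mathcal{O}}(k)$ bits, so each update $(i_t,\delta_t)$ is processed as $y \leftarrow y + \delta_t \cdot A_{\cdot,i_t} \bmod q$ using only $\tilde{\mathcal{O}}(k)$ working state. At query time, the algorithm searches (by exhaustive enumeration over supports of size $k$ and integer entries in $\{-M,\dots,M\}$) for a $k$-sparse vector $x^\star$ satisfying $Ax^\star \equiv y \pmod q$; if one is found it is output, otherwise the algorithm declares the input not to be $k$-sparse. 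Only the space bound is claimed, so exponential decoding time is acceptable.

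Correctness reduces in both directions to SIS. If the true stream vector $x$ is $k$-sparse, it is itself a valid preimage of $y$, so decoding succeeds; any alternative $k$-sparse $x^\star \neq x$ would yield $z := x - x^\star$, a nonzero integer vector of support size at most $2k$ with $\|z\|_\infty \leq 2M$ and $Az \equiv 0 \pmod q$, i.e., an SIS solution. Symmetrically, if the true $x$ is not $k$-sparse but decoding spuriously returns some $k$-sparse $x^\star$, then $x - x^\star$ is again a short, nonzero kernel vector of $A$. Thus any incorrect output exhibits an SIS witness.

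For the white-box guarantee, note that the adversary's entire view is $(A,y)$ together with the stream history; the only secret randomness the algorithm uses is the random oracle, which is available to the adversary as well, so leakage of the per-round bits $R_t$ confers no additional advantage. Producing an erroneous output therefore requires the adversary to craft a stream whose net vector participates in an SIS collision, which by the exponential hardness of SIS occurs with negligible probability against any polynomial-time (in fact, subexponential-time) adversary.

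The main obstacle is the choice of SIS parameters $(m,q,\beta)$. We simultaneously need (i) $m \log q = \tilde{\mathcal{O}}(k)$ so that the sketch fits in $\tilde{\mathcal{O}}(k)$ bits, (ii) $\beta \geq 2M\sqrt{2k}$ so that the bound of the SIS instance covers every possible difference of two candidate $k$-sparse vectors, and (iii) $m$ polynomially large in $n$ so that the worst-case-to-average-case SIS reduction applies at the desired security level. The hypothesis $k \geq n^c$ is exactly what reconciles (i) and (iii): it permits $m$ to be polynomial in $n$ while the total sketch remains $\tilde{\mathcal{O}}(k)$. I expect this parameter balancing, together with verifying that the random-oracle-generated $A$ is distributed identically to a genuine SIS challenge, to be the chief technical step; the structural collision argument above is then essentially automatic.
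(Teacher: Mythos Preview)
Your proposal is correct and follows the paper's approach exactly: maintain $v = Ax \bmod q$ for a random-oracle SIS matrix $A$ with $\tilde{\mathcal{O}}(k)$ rows, decode by brute-force enumeration over $k$-sparse candidates, and argue that any incorrect output exhibits a short nonzero kernel vector which an $n^{O(k)}$-time enumerator (combined with the stream adversary) could extract to break SIS at security level $\tilde{\Theta}(k)$. One small correction to your parameter list: when the true $x$ is \emph{not} $k$-sparse the difference $x-x^\star$ may have full support, so you need $\beta$ on the order of $2M\sqrt{n}$ rather than $2M\sqrt{2k}$---this is still $\poly(n)$ and leaves the space bound and the rest of the argument unchanged.
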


We note that there are standard deterministic, and hence also white-box adversarially robust, $k$-sparse vector recovery schemes based on any deterministic algorithm for compressed sensing \cite{CW08}.  However, previous algorithms require the promise that the input is $k$-sparse; otherwise, their output can be arbitrary. That is, there is no way to know if the input is $k$-sparse or not. In contrast, our algorithm does not assume sparsity of the input, and reports a failure when the input is not $k$-sparse. We stress that this is not an artifact of analyses of previous algorithms; in fact, any deterministic streaming algorithm cannot detect if its input vector is $k$-sparse without using $\Omega(n)$ bits of memory \cite{GM06}. By Theorem 2 in ~\cite{10.1145/3517804.3526228}, this implies an $\Omega(n)$ bit lower bound for {\it any randomized} $k$-sparse decision algorithms in the white-box streaming model. Thus our algorithm provides a provable separation between computationally bounded and unbounded adversaries, under cryptographic assumptions. 

While sparsity is a common way of capturing vectors described with few parameters, low rank is a common way of capturing matrices described with few parameters. We next extend our results to the matrix setting:

\begin{theorem}
\label{thm:m-rec-1}
    Assuming the exponential hardness of the SIS problem, there exists a white-box adversarially robust streaming algorithm which decides if an $n \times n$ input matrix with integer entries bounded by a polynomial in $n$, has rank at most $k$, and if so, recovers the matrix using $\tilde{O}(nk)$ bits of space in the random oracle model \footnote{Our results all generalize to $n \times d$ matrices; we state them here for square matrices for convenience only.}.
\end{theorem}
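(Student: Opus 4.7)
The plan is to maintain a two-sided SIS-based linear sketch of the input matrix $\bM$, recover $\bM$ via a CUR-style formula, and then reduce the uniqueness of the output to the hardness of SIS. Concretely, I would use the random oracle to sample two independent SIS-hard matrices $\bS \in \mathbb{Z}_q^{r\times n}$ and $\bT \in \mathbb{Z}_q^{n\times r}$ with $r = \Theta(k)$ and $q = \poly(n)$, and throughout the stream maintain the two sketches $\bS\bM$ and $\bM\bT$ modulo $q$; together they occupy $\tilde{O}(nk)$ bits. Upon a query the algorithm forms $\bS\bM\bT$ by a post-multiplication, computes the candidate $\hat\bM := (\bM\bT)(\bS\bM\bT)^{\dagger}(\bS\bM)$, and verifies that $\bS\hat\bM = \bS\bM$ and $\hat\bM\bT = \bM\bT$; if both checks pass it outputs $\hat\bM$, otherwise it declares $\rank(\bM) > k$.

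For the completeness direction, suppose $\rank(\bM) \le k$ and write $\bM = \bU\bV^\top$ with $\bU,\bV \in \mathbb{Z}^{n\times k}$. A standard sketching argument shows that for uniformly random $\bS,\bT$ over $\mathbb{Z}_q$ with $r = \Theta(k)$ and a suitable prime $q$, the matrices $\bS\bU$ and $\bV^\top\bT$ both have full rank $k$ except with negligible probability. In that case $\bS\bM\bT = (\bS\bU)(\bV^\top\bT)$ factors as a product of full-rank matrices, the CUR identity $\bM = (\bM\bT)(\bS\bM\bT)^{\dagger}(\bS\bM)$ holds exactly, so $\hat\bM = \bM$ and verification passes in $\poly(n)$ time.

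The hard part is soundness against a white-box adversary, and my plan is to reduce it to SIS. Suppose the stream causes verification to pass with some output $\hat\bM$, and let $\bN := \bM - \hat\bM$. Then $\bS\bN \equiv 0 \pmod q$ and $\bN\bT \equiv 0 \pmod q$, while $\bN$ is integer with polynomially bounded entries---securing the latter requires $\hat\bM$ itself to be integer and polynomially bounded, which I would obtain either by lifting the arithmetic carefully over $\mathbb{Z}_q$ or, in the style of \thmref{k-rec-1}, by reformulating recovery as ``find any bounded-entry rank-$\le k$ matrix consistent with the sketches''. Any nonzero column of $\bN$ is then a short nonzero vector in $\ker(\bS) \bmod q$, i.e., a valid SIS solution for $\bS$. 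Under the exponential SIS-hardness assumption no polynomially bounded white-box adversary can trigger such a $\bN \ne 0$ with non-negligible probability, so whenever verification passes we must have $\hat\bM = \bM$; conversely, when $\rank(\bM) > k$ no rank-$\le k$ $\hat\bM$ can satisfy both sketch equations (otherwise the same reduction yields SIS), verification fails, and the high-rank case is detected. The principal obstacle is the combined task of (i) ensuring $\hat\bM$ has small integer entries so that $\bN$ is a legitimate SIS solution and (ii) setting $r$, $q$, and the SIS norm bound precisely enough that the reduction lands in a parameter regime where SIS is conjectured hard, analogous to the parameter choices made in the vector sparse-recovery proof of \thmref{k-rec-1}.
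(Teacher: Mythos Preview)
Your route differs substantially from the paper's, and while the core reduction idea is valid, the two-sided sketch forces you into an SIS instance that is too small to support the theorem as stated.

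The paper does not sketch $M$ on both sides. It vectorizes $M$ to a length-$n^2$ vector and maintains $H\cdot\mathrm{vec}(M)\bmod q$ for a single uniformly random $H\in\mathbb{Z}_q^{\tilde{O}(nk)\times n^2}$, alongside a separate random Bernoulli sketch that drives nuclear-norm minimization (the Recht--Fazel--Parrilo guarantee, Theorem~\ref{thm:iso}) to produce a candidate $X_0$; it then accepts $X_0$ only if $\rank(X_0)\le k$, $X_0\in\mathbb{Z}^{n\times n}_\beta$, and $H\cdot\mathrm{vec}(X_0)=H\cdot\mathrm{vec}(M)\bmod q$. Soundness (Lemma~\ref{lem:mat}) is by an enumeration reduction: if any bounded-entry rank-$\le k$ matrix $Y\neq M$ matched the $H$-sketch, a $\poly(n)^{nk}$-time adversary could brute-force $Y$ and output $\mathrm{vec}(M-Y)$ as an SIS solution, which forces the SIS row-dimension to exceed $nk\log n$---and a $\tilde{O}(nk)$-length sketch on the $n^2$-dimensional vectorization affords exactly that.

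Your direct reduction---a nonzero column of $N=M-\hat M$ is already an SIS witness for $S$---is cleaner than enumeration, but it targets SIS for $S\in\mathbb{Z}_q^{\Theta(k)\times n}$, an instance with only $\Theta(k)$ rows. Under Assumption~\ref{asm:strong} that rules out only $o(2^{\Theta(k)})$-time adversaries, whereas the paper's formal version of this theorem delivers robustness against $o(n^{nk})$-time adversaries with no lower bound on $k$; your guarantee is vacuous once $k=O(\log n)$. You cannot enlarge $r$ without blowing up the space of $SM$ beyond $\tilde{O}(nk)$, and interpreting $\mathrm{vec}(SM)=(I_n\otimes S)\,\mathrm{vec}(M)$ as one long sketch does not rescue you because $I_n\otimes S$ is block-diagonal, not uniformly random, so SIS hardness does not apply. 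The bounded-entry and $\mathbb{Z}_q$-lifting issues you flag are real but secondary; the paper sidesteps them by decoupling recovery (over $\mathbb{R}$, via the separate Bernoulli sketch) from verification (over $\mathbb{Z}_q$), and by explicitly checking rank and entry bounds before accepting---checks your CUR-based procedure should also include.
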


Theorem \ref{thm:m-rec-1} provides the first low rank matrix recovery algorithm in the white-box streaming model. Moreover, the space complexity of this algorithm is nearly optimal, as just describing such a matrix requires $\Omega(nk \log n)$ bits. This result again provides a separation from deterministic algorithms under cryptographic assumptions, as a simple reduction from the Equality communication problem (see, e.g., \cite{AlonMS99} for similar reductions) shows that testing if the input matrix in a stream is all zeros or not requires $\Omega(n^2)$ memory. 

In addition, our results can be further extended to recover a sparse plus low rank matrix for robust principal component analysis (robust PCA) \cite{doi:10.1137/090761793, 10.1145/1970392.1970395} and also can recover a low rank tensor: 

\begin{theorem}[Informal]
    Assuming the exponential hardness of the SIS problem, for $k \geq n^{\Omega(1)}$, there exists a white-box adversarially robust streaming algorithm which decides if an $n \times n$ input matrix can be decomposed into a matrix with rank at most $k$ plus a matrix with at most $r$ non-zero entries using $\tilde{\mathcal{O}}(nk+r)$ bits of space in the random oracle model.
\end{theorem}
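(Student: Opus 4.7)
The plan is to combine the low-rank matrix recovery scheme of \thmref{m-rec-1} with the sparse recovery scheme of \thmref{k-rec-1} applied to the flattened matrix, so that a single linear sketch of dimension $\tilde{\mathcal{O}}(nk+r)$ captures enough information to recover a decomposition $\bM = \bL + \bS$ with $\rank(\bL) \le k$ and $\|\bS\|_0 \le r$. The main observation is that such a pair has intrinsic description length $\tilde{\mathcal{O}}(nk+r)$, matching the target space bound, and both constituent schemes are already SIS-based linear sketches, hence composable by vertical stacking.

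Concretely, I would sample a uniformly random matrix $\bA \in \mathbb{Z}_q^{m \times n^2}$ from the random oracle with $m = \tilde{\mathcal{O}}(nk+r)$ and modulus $q = \poly(n)$, formed by stacking the $\tilde{\mathcal{O}}(nk)$-row low-rank sketch of \thmref{m-rec-1} on top of the $\tilde{\mathcal{O}}(r)$-row $r$-sparse sketch of \thmref{k-rec-1} (viewed on $\mathrm{vec}(\bM) \in \mathbb{Z}^{n^2}$). Throughout the stream, maintain $\by = \bA \cdot \mathrm{vec}(\bM) \bmod q$, which updates linearly on each stream update and uses $\tilde{\mathcal{O}}(nk+r)$ bits. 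At query time, run a joint decoder that invokes the low-rank and sparse subroutines on the appropriate blocks of $\by$ to produce a candidate $(\hat\bL, \hat\bS)$, then verify $\bA(\mathrm{vec}(\hat\bL) + \mathrm{vec}(\hat\bS)) \equiv \by \pmod q$ and declare ``not rank-$k$ plus $r$-sparse'' if verification fails.

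Correctness against any $\mathcal{T}$ time-bounded white-box adversary will reduce directly to SIS hardness. Suppose that, after observing the state of the algorithm, the adversary were able to force it to accept a candidate $\bM' \ne \bM$ that also admits a rank-$k$ plus $r$-sparse decomposition with $\poly(n)$-bounded entries. Then one constructs a $\mathcal{T}$-time SIS solver by simulating this interaction and outputting $\bz := \mathrm{vec}(\bM - \bM')$: this $\bz$ is nonzero, has $\|\bz\|_\infty \le \poly(n)$, and satisfies $\bA\bz \equiv 0 \pmod q$, contradicting the assumed exponential hardness of SIS. Hence, with all but negligible probability, no such $\bM'$ exists, and the detection guarantee for matrices not of the target form is inherited from \thmref{m-rec-1} and \thmref{k-rec-1}, since any failure at the recovery or verification step is reported as ``not decomposable.''

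The main obstacle will be the joint recovery step. Uniqueness of $\bM$ comes essentially for free from the SIS reduction, but extracting $(\hat\bL, \hat\bS)$ is subtle because neither $\bM$ nor an obvious residual is itself low rank or sparse, and naively plugging $\bM$ into either \thmref{m-rec-1} or \thmref{k-rec-1} immediately triggers their failure detector. The plan is to argue that the low-rank decoder of \thmref{m-rec-1} is robust to a bounded number of arbitrary entry corruptions, so that, using the $\tilde{\mathcal{O}}(r)$ extra SIS measurements supplied by the sparse sketch to identify the support of $\bS$, one can then solve the resulting structured linear system in polynomial time to extract both $\hat\bL$ and $\hat\bS$. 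Showing that this joint decoder succeeds while keeping the dependence on $k$ and $r$ additive rather than multiplicative is the crux of the argument.
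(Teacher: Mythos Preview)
Your SIS-based verification step is essentially the same as the paper's: maintain a uniformly random sketch $\bA\cdot\mathrm{vec}(\bM)\bmod q$ of dimension $\tilde{\mathcal{O}}(nk+r)$, and argue via SIS hardness that no candidate $(\hat\bL,\hat\bS)$ with $\hat\bL+\hat\bS\neq\bM$ can pass the check. The paper makes this quantitative in \lemref{rpca} by counting the candidate pairs (there are $\poly(n)^{r+k\sqrt{m}}$ of them), so an $o(2^n)$-time adversary could enumerate them all; hence none can collide with $\bM$ under $\bA$. Your reduction sketch is compatible with this, though you should make the enumeration explicit rather than only invoking the collision $\bA\bz\equiv 0$.

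The genuine gap is exactly the one you flag: the joint decoder. Your plan to reuse the decoders of \thmref{m-rec-1} and \thmref{k-rec-1} on blocks of the sketch does not work, because each of those decoders is applied to $\mathrm{vec}(\bM)=\mathrm{vec}(\bL)+\mathrm{vec}(\bS)$, which is neither rank-$\le k$ nor $r$-sparse; both subroutines will (correctly) report failure, and nothing in \thmref{m-rec-1} promises robustness to $r$ arbitrary entry corruptions. Your suggestion to first identify $\mathrm{supp}(\bS)$ from the extra $\tilde{\mathcal{O}}(r)$ SIS rows is circular: sparse-recovery identifies a support only when the residual \emph{is} sparse, which here would require already knowing $\bL$.

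The paper sidesteps this entirely. Rather than composing the two earlier schemes, it runs in parallel a single off-the-shelf compressed-sensing scheme for robust PCA (\thmref{iso-rs}, from \cite{Tanner2020CompressedSO}): a Bernoulli measurement map $\mathcal{A}:\mathbb{R}^{n\times n}\to\mathbb{R}^{\tilde{\mathcal{O}}(nk+r)}$ such that, given $b=\mathcal{A}(\bL_0+\bS_0)$, the pair $(\bL_0,\bS_0)$ is the unique feasible point and is recovered in $\poly(n)$ time by the SDP $\argmin_{\bL,\bS}\|\bL\|_*+\sqrt{2r/s}\,\|\bS\|_1$ subject to $\|\mathcal{A}(\bL+\bS)-b\|_2\le\eps$. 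This is the fast decoder; the SIS sketch is used only to certify the output (or report failure when $\bM$ is not decomposable). So the architecture is the same two-track ``fast deterministic recovery $+$ SIS verification'' pattern as in the vector and matrix sections, but the fast recovery is a dedicated RPCA result, not a splice of the earlier two.
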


\begin{theorem}[Informal]
    Assuming the exponential hardness of the SIS problem, there exists a white-box adversarially robust streaming algorithm which decides if an $n_1 \times \cdots \times n_d$ input tensor has CP\mbox{-}rank at most $k$ for some $k \geq (\prod_{i=1}^d n_i)^{\Omega(1)}$, and if so, recovers the tensor using $\tilde{\mathcal{O}}(k(n_1 + \cdots n_d))$ bits of space in the random oracle model.
\end{theorem}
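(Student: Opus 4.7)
The plan is to extend the low-rank matrix recovery construction of \thmref{m-rec-1} to the tensor setting by exploiting two structural facts: a CP-rank-$k$ tensor $T=\sum_{i=1}^k u_i^{(1)}\otimes\cdots\otimes u_i^{(d)}$ is parameterized by $d$ factor matrices $U^{(j)}\in\mathbb{Z}^{n_j\times k}$ totaling $O(k\sum_j n_j)$ entries, and each mode-$j$ matricization $T_{(j)}$ is a matrix of rank at most $k$ whose column span coincides with the span of $U^{(j)}$. Thus a CP-rank-$k$ tensor is completely determined by the $d$ column spans together with a small $k\times\cdots\times k$ core describing how the factors combine, so our job reduces to recovering those $d$ column spans and the core.

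For each mode $j\in[d]$, using the random oracle we instantiate an SIS-style random matrix $A^{(j)}\in\mathbb{Z}_q^{(\prod_{i\neq j}n_i)\times\tilde{\mathcal{O}}(k)}$ of exactly the flavour used in \thmref{m-rec-1}, and maintain the mode-$j$ sketch $S^{(j)}=T_{(j)}A^{(j)}$ in $\tilde{\mathcal{O}}(n_j k)$ bits. A stream update $(i_1,\ldots,i_d,v)$ is a single-entry update to $T$, which becomes a rank-one update to each $S^{(j)}$ whose coefficients form one row of $A^{(j)}$, computed on the fly via the random oracle so that $A^{(j)}$ itself consumes no persistent storage. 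Summing over the $d$ modes gives the claimed $\tilde{\mathcal{O}}(k\sum_j n_j)$ bits of state.

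For recovery, we first run, for each mode $j$, the column-span decoder underlying \thmref{m-rec-1} on $S^{(j)}$ to obtain either an $n_j\times k$ basis $\widehat U^{(j)}$ for the column span of $T_{(j)}$ or a ``rank exceeds $k$'' certificate in that mode. We then project the stream sketches onto $\widehat U^{(1)}\otimes\cdots\otimes\widehat U^{(d)}$ to reduce to a $k\times\cdots\times k$ core, add a small oracle-driven random perturbation to ensure genericity, and run Jennrich's algorithm on the core to extract candidate factors $\widetilde U^{(j)}$ and hence a candidate tensor $\widehat T$. In parallel we maintain one independent SIS-collision-resistant hash of the flattened stream, again as in \thmref{m-rec-1}; we output $\widehat T$ iff its hash matches that of the stream. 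By SIS hardness together with the condition $k\ge(\prod_i n_i)^{\Omega(1)}$, which plays the role of the $k\ge n^c$ lower bound in the vector case by fixing the hardness parameter at a polynomial fraction of the ambient dimension, no polynomial-time white-box adversary can force a collision except with negligible probability, so whenever the tensor has CP-rank greater than $k$ the algorithm rejects, and whenever it is CP-rank at most $k$ it returns $\widehat T=T$.

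The main obstacle is the recovery phase, since tensor decomposition is substantially more delicate than its matrix counterpart: Jennrich-style algorithms need the recovered factor matrices to be in generic position, and a white-box adversary could try to plant degenerate factors. We deal with this by separating concerns cleanly. The sketches $S^{(j)}$ are used only for the linear-algebraic column-span recovery step, which is directly covered by the analysis of \thmref{m-rec-1}; the CP decomposition is carried out only on the resulting $k\times\cdots\times k$ core, after mixing in fresh random noise drawn from the oracle at query time that the adversary cannot anticipate; and any incorrect reconstruction, whether caused by rank overflow or by degeneracy, is caught with overwhelming probability by the independent SIS-based certificate sketch, which is the same verification gadget that underlies the matrix case.
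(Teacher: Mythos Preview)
Your high-level structure---recover a candidate and accept it only if it matches an SIS hash of the vectorized tensor---is exactly the paper's template, but the recovery half is where you and the paper diverge. The paper does not build a tensor decoder from mode sketches at all; it simply invokes an existing low-rank tensor sensing result (Theorem~\ref{thm:iso-tensor}) which, from $\tilde{\mathcal{O}}(k(n_1+\cdots+n_d))$ Gaussian linear measurements of the \emph{full} vectorized tensor, returns any CP\mbox{-}rank-$\le k$ input to inverse-polynomial Frobenius error, then rounds and checks the result against a single SIS sketch via Lemma~\ref{lem:tensor}. There is no column-span extraction, no core, and no Jennrich.

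Your recovery procedure has a genuine gap at the core step. The only persistent data you keep are the mode sketches $S^{(j)}=T_{(j)}A^{(j)}$ and one SIS hash. From $S^{(j)}$ you can indeed read off the column span $\widehat U^{(j)}$ of $T_{(j)}$, but ``project the stream sketches onto $\widehat U^{(1)}\otimes\cdots\otimes\widehat U^{(d)}$ to reduce to a $k\times\cdots\times k$ core'' is not a computation available from what you have stored: $(\widehat U^{(j)})^{+}S^{(j)}$ gives only $\tilde{\mathcal{O}}(k)$ random linear combinations of the $k^{d-1}$ columns of the mode-$j$ unfolding of the core, and you give no argument that the $d$ mode sketches together determine $C$ (standard Tucker-sketching schemes need a separate all-modes-compressed core sketch precisely for this reason, and that sketch would cost $\tilde{\mathcal{O}}(k)^d$ space). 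Even granting the core, the Jennrich step is unsafe here: a white-box adversary can hand you a genuinely rank-$\le k$ tensor with collinear factor columns, on which Jennrich fails; perturbing the core before decomposing produces some $\widehat T\neq T$, the SIS check then rejects, and you wrongly output \textsc{None} on a valid low-rank instance. (Also, noise ``drawn from the oracle at query time'' is visible to the adversary---the random oracle is shared; only the per-round fresh bits $R_t$ are hidden until after use.) The paper's black-box recovery avoids both issues because its guarantee holds for every rank-$\le k$ input with no genericity hypothesis.
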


\subsubsection{Applications}
Our sparse recovery theorem for vectors can be used to simplify and improve the existing upper bound for the $\ell_0$-norm estimation problem in the white-box adversarial model, which is the problem of estimating the support size, i.e., the number of non-zero entries of the input vector $x$. 

\begin{theorem}[Informal]
    Assuming the exponential hardness of the SIS problem, there exists a white-box adversarially robust streaming algorithm which estimates the $\ell_0$ norm within a factor of $n^\eps$ using $\tilde{\mathcal{O}}(n^{1-\eps})$ bits of space in the random oracle model.
\end{theorem}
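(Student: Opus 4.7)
The plan is to obtain the estimate from a single invocation of the $k$-sparse recovery scheme of Theorem~\ref{thm:k-rec-1}, exploiting the fact that this scheme not only recovers $k$-sparse inputs but also \emph{detects} when the input fails to be $k$-sparse. Because detection of non-sparsity is already robust to a white-box adversary under the SIS assumption, the $\ell_0$ estimator inherits white-box robustness essentially for free, and the only work is a simple case analysis on the output of the recovery routine.

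Concretely, set $k = \lceil n^{1-\eps} \rceil$; since $\eps$ is a constant in $(0,1)$, this $k$ satisfies the requirement $k \geq n^c$ of Theorem~\ref{thm:k-rec-1}. Instantiate the sparse recovery data structure of that theorem with this $k$ and run it on the stream, using $\tilde{\mathcal{O}}(k) = \tilde{\mathcal{O}}(n^{1-\eps})$ bits. At query time, branch on the output: if the routine returns a vector $\hat{x}$, then by the guarantee of Theorem~\ref{thm:k-rec-1} we have $\hat{x}=x$, so output $\|\hat{x}\|_0$, which equals $\|x\|_0$ exactly and therefore satisfies any approximation guarantee. If instead it reports that the input is not $k$-sparse, then $\|x\|_0 \in (k, n]$, and output the geometric mean $\hat{L} := \bigl\lceil\sqrt{n \cdot k}\,\bigr\rceil = \Theta(n^{1-\eps/2})$.

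The approximation check is short: in the non-sparse branch, $\|x\|_0 > n^{1-\eps}$ gives $\hat{L}/\|x\|_0 < n^{1-\eps/2}/n^{1-\eps} = n^{\eps/2}$, and $\|x\|_0 \leq n$ gives $\hat{L}/\|x\|_0 \geq n^{1-\eps/2}/n = n^{-\eps/2}$, so $\hat{L}/\|x\|_0 \in [n^{-\eps/2}, n^{\eps/2}]$ and hence within a factor of $n^{\eps}$ in both directions. Correctness against any polynomial-time white-box adversary reduces entirely to the white-box correctness of the recovery scheme: the adversary cannot force the routine to misclassify sparsity or return an incorrect vector, so the chosen branch is always the right one, and the random oracle is used only inside the subroutine.

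There is no substantive obstacle here; the proposal is really an observation. The improvement from the prior $\tilde{\mathcal{O}}(n^{1-\eps+c\eps})$ bound to $\tilde{\mathcal{O}}(n^{1-\eps})$ comes entirely from plugging in the new $\tilde{\mathcal{O}}(k)$-bit sparse recovery primitive of Theorem~\ref{thm:k-rec-1}, together with the fact that its built-in non-sparsity detection eliminates the need for a geometric hierarchy of subsampled sketches and the associated $\poly(\log n)$ or $n^{c\eps}$ overhead.
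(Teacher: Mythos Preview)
Your proposal is correct and follows essentially the same approach as the paper: set $k=n^{1-\eps}$, run the sparse-recovery primitive, output the exact count when recovery succeeds and a fixed value when it reports non-sparsity. The only cosmetic difference is that in the non-sparse branch the paper outputs $n^{1-\eps}$ (a one-sided underestimate by at most $n^\eps$) while you output the geometric mean $n^{1-\eps/2}$ (a two-sided $n^{\eps/2}$ estimate); both meet the stated $n^\eps$ guarantee.
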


Previously, the only known white-box adversarily robust algorithm for $\ell_0$ norm estimation required $\Tilde{\mathcal{O}}(n^{1-\eps+c\eps})$ space for an $n^\eps$-approximation, where $c > 0$ is a fixed constant, in the random oracle model. Our algorithm replaces $c$ with $0$.

Based on our low rank matrix recovery algorithm, we give the first algorithm for finding a maximum matching in a graph if the maximum matching size is small, or declare that the maximum matching size is large, in a stream with insertions and deletions to its edges. Standard methods based on filling in the so-called Tutte matrix of a graph randomly do not immediately work, since the adversary sees this randomness in the white box model. Nevertheless, we show that filling in the Tutte matrix deterministically during the stream suffices for our purposes. 

\begin{theorem}[Informal]
\label{thm:apps-1}
    Assuming the exponential hardness of the SIS problem, there is a white-box adversarially robust streaming algorithm using $\tilde{O}(nk)$ space in the random oracle model and $\poly(n)$ running time, which either declares the maximum matching size is larger than $k$, or outputs a maximum matching.  
\end{theorem}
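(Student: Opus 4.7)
I plan to reduce the maximum matching problem to the low-rank matrix recovery algorithm of \thmref{m-rec-1} applied to a suitable deterministic matrix representation of the graph. Recall Lov\'asz's theorem: for a graph $G$ on $n$ vertices, the $n \times n$ skew-symmetric Tutte matrix $T(G)$, with formal indeterminate $x_{ij}$ at position $(i,j)$ for each edge and $-x_{ji}$ at $(j,i)$, has rank exactly $2\nu(G)$ over the polynomial ring, where $\nu(G)$ is the maximum matching size. I will substitute each $x_{ij}$ with a fixed polynomial-sized integer weight $w_{ij}$ chosen deterministically before the stream begins and computable from $(i,j)$ on the fly. With such a substitution, $T$ becomes an $n \times n$ integer matrix with entries of magnitude $n^{O(1)}$, exactly the object that \thmref{m-rec-1} is designed to maintain, and $\rank(T) \le 2\nu(G)$ since $T$ is a specialization of the symbolic Tutte matrix.

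During the stream I treat each edge insertion/deletion as a pair of coordinate updates $T_{ij} \leftarrow T_{ij} \pm w_{ij}$ and $T_{ji} \leftarrow T_{ji} \mp w_{ij}$, which are fed into the recovery algorithm with rank parameter $2k$; this uses $\tilde O(nk)$ bits of space by \thmref{m-rec-1}. At query time I invoke the recovery procedure. If it reports rank greater than $2k$, then $\rank(T) > 2k$ forces $\nu(G) > k$, and I correctly declare the matching is large. Otherwise, recovery returns the matrix $T$ explicitly, from which I extract a maximum matching in offline polynomial time via standard Pfaffian- or inversion-based algorithms (Rabin--Vazirani, Mucha--Sankowski); I output this matching if its size is at most $k$ and declare large otherwise. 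Correctness then follows from the white-box robustness of \thmref{m-rec-1} together with the rank identity $\rank(T) = 2\nu(G)$, and the total running time is $\poly(n)$ because the recovery algorithm and the offline matching subroutine each run in polynomial time.

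The main obstacle is establishing this rank identity for a deterministic polynomial-sized substitution. A uniformly random substitution would suffice against an oblivious adversary by Schwartz--Zippel, but in the white-box model the adversary sees the weights and can adaptively add edges whose contributions cancel in the Pfaffian, collapsing $\rank(T)$ below $2\nu(G)$. The plan to resolve this is to fix the $w_{ij}$ before any stream update---so the adversary cannot tailor them to $G$---and to take them from a structured deterministic family (for instance a Vandermonde- or Cauchy-style assignment over a polynomial integer range, or a random-oracle hash whose security against rank-collapsing queries reduces to the exponential hardness of SIS) for which no polynomially bounded choice of $G$ can force the signed sum of matching products in the Pfaffian to vanish. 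Proving this ``white-box-safe substitution'' lemma is the technical heart of the argument and is where the paper's cryptographic machinery genuinely enters this matching application; the remaining pieces are essentially bookkeeping on top of \thmref{m-rec-1} and classical algebraic matching theory.
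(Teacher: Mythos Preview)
Your high-level strategy---substitute deterministically into the Tutte matrix and feed it to the low-rank recovery primitive of \thmref{m-rec-1} with rank parameter $2k$---is exactly the paper's approach. The gap is in what you do \emph{after} successful recovery, and it makes the ``main obstacle'' you identify entirely unnecessary.

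You plan to run Rabin--Vazirani or Mucha--Sankowski directly on the recovered matrix $T$ with the fixed weights $w_{ij}$. Those algorithms are only guaranteed to output a maximum matching when $\rank(T)=2\nu(G)$; with a deterministic substitution the adversary can indeed choose $G$ so that $\rank(T)<2\nu(G)$, and then your offline routine may return a matching strictly smaller than $\nu(G)\le k$, which you would then (incorrectly) output. This is why you are driven to seek a ``white-box-safe substitution'' lemma, a statement that is both hard and, for deterministic polynomial-size weights against an adaptive adversary, likely false.

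The paper avoids this problem by observing that once $T$ is recovered in full, the \emph{support} of $T$ is exactly the edge set of $G$ (simply take $w_{ij}=1$). So you have reconstructed $G$ itself, and at that point you may run any offline maximum-matching algorithm on $G$, using fresh randomness if you like since the stream is over and the adversary can no longer adapt. The only inequality ever needed is $\rank(T)\le 2\nu(G)$, which holds for \emph{every} substitution because $T$ is a specialization of the symbolic Tutte matrix; the reverse inequality is never used. Thus the ``technical heart'' you describe disappears, and the argument reduces to bookkeeping on top of \thmref{m-rec-1}.
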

We note that for any matrix problem, such as linear matroid intersection or parity or union, matrix multiplication and decomposition, finding a basis of the null space, and so on, if the input consists of low rank matrices then we can first recover the low rank matrix in a white-box adversarial stream, verify the input is indeed of low rank, and then run an offline algorithm for the problem, such as those in \cite{cheung13}. 

Besides solving new problems, as an immediate corollary we also obtain an improved quantitative bound for testing if the rank of an input matrix is at most $k$, which is the rank decision problem of \cite{10.1145/3517804.3526228}.

\begin{theorem}[Informal] 
Assuming the exponential hardness of the SIS problem, there exists a white-box adversarially robust streaming algorithm which solves the rank decision problem using $\tilde{\mathcal{O}}({nk})$ bits of space in the random oracle model.
\end{theorem}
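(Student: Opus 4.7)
The plan is to derive this theorem as an immediate corollary of the low rank matrix recovery result in \thmref{m-rec-1}. That theorem already produces an algorithm that, in the white-box adversarial stream model, decides whether the rank of the input matrix $\bA \in \mathbb{Z}^{n \times n}$ is at most $k$, and moreover explicitly reconstructs $\bA$ when the rank bound is satisfied. The rank decision problem asks strictly less, since we only need the YES/NO answer. Thus the streaming algorithm for the rank decision problem is simply: run the recovery algorithm of \thmref{m-rec-1} on the stream of updates, and at query time output ``rank $\le k$'' if the recovery algorithm reports success (and therefore has certified a rank-$k$ reconstruction) and output ``rank $> k$'' otherwise.

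The correctness against a white-box adversary is inherited directly from \thmref{m-rec-1}: by assumption no polynomial-time adversary can, except with negligible probability, force the recovery algorithm to misclassify the rank or return an incorrect low-rank reconstruction; the rank decision algorithm's output is a deterministic function of this classification flag, so its failure probability is bounded by the same negligible quantity. The space bound is also inherited verbatim, giving $\tilde{\mathcal{O}}(nk)$ bits in the random oracle model, which is a factor-$k$ improvement over the $\tilde{\mathcal{O}}(nk^2)$ bound of \cite{10.1145/3517804.3526228} recorded in \tableref{sum}.

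There is essentially no technical obstacle here beyond what is already handled in \thmref{m-rec-1}: all the work of designing SIS-based sketches that remain sound when their internal randomness is revealed, and of detecting rank violations rather than silently returning garbage, happens inside that theorem. The only point worth spelling out in the write-up is that the rank decision algorithm does not need to store the recovered matrix or its factorization in memory at query time---since we only output a single bit, the $\tilde{\mathcal{O}}(nk)$ bits of persistent streaming state used by the recovery procedure are the sole space cost, and the factorization (if any) can be materialized on the fly for the purpose of reporting success.
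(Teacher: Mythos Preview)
Your proposal is correct and matches the paper's own proof essentially verbatim: the paper's argument is a one-line reduction stating that the rank decision problem is solved by running the low-rank matrix recovery algorithm (Algorithm~\ref{alg:m-rec}) with parameter $k$, which directly improves the $\tilde{\mathcal{O}}(nk^2)$ bound of \cite{10.1145/3517804.3526228}. Your additional remarks about not needing to materialize the recovered matrix are fine elaboration but go beyond what the paper spells out.
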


In \cite{10.1145/3517804.3526228}, a weaker $\mathcal{O}(nk^2)$ space bound was shown for white-box adversarially robust algorithms. Our improvement comes by observing that we can get by with many fewer than $k$ rows in our sketch, provided that the modulus $q$ in the SIS problem (see Section \ref{sec:prelim}) is large enough. This may be counterintuitive as the rank of our sketch may be much less than $k$ but we can still recover rank-$k$ inputs by using a large enough modulus to encode them. 

\section{Preliminaries}\label{sec:prelim}
\subsection{Short Integer Solution Problem}

We make use of well-studied cryptographic assumptions in the design of our algorithms. Specifically, we construct white-box adversarially robust algorithms based on the assumed hardness of the Short Integer Solution problem.

\begin{defn}[Short Integer Solution (SIS) Problem] Let $n, m, q$ be integers and let $\beta > 0$. Given a uniformly random matrix $A \in \mathbb{Z}^{n \times m}_{q}$ with $m \in \poly(n)$, the SIS problem is to find a non-zero integer vector $z \in \mathbb{Z}^m$ such that $Az = 0\mod q$ and $\lVert z\rVert_2 \leq \beta$.
\end{defn}

\begin{theorem}[\cite{cryptoeprint:2013/069}]
\label{thm:SIS}
    Let $n$ and $m, \beta, q \in \poly(n)$ be integers and $q \geq n\cdot \beta$. Then solving the SIS problem with non-negligible probability, with parameters $n, m, q, \beta$ is at least as hard as $\gamma$-approximation of the Shortest Vector Problem (SVP$_\gamma$) with $\gamma \in \poly(n)$.
\end{theorem}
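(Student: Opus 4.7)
The plan is to establish this as a worst-case to average-case reduction in the style of Ajtai's original argument and its quantitative tightening by Micciancio--Regev. The goal is to convert any algorithm $\mathcal{S}$ that solves SIS on a uniformly random input $A \in \mathbb{Z}_q^{n \times m}$ with non-negligible probability into an algorithm that, given an arbitrary full-rank lattice $\Lambda \subset \mathbb{R}^n$, finds a nonzero vector in $\Lambda$ of length at most $\gamma \cdot \lambda_1(\Lambda)$ for some $\gamma \in \poly(n)$.

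First I would invoke a discrete Gaussian sampler on $\Lambda$ to draw $m$ independent samples $x_1, \ldots, x_m$ from $D_{\Lambda, s}$, the discrete Gaussian with parameter $s$ chosen just above the smoothing parameter $\eta_\varepsilon(\Lambda)$. Packaging the residues of these samples modulo $q\Lambda$ (expressed in an appropriately scaled basis of $\Lambda$) as columns yields a matrix $A \in \mathbb{Z}_q^{n \times m}$. By the standard smoothing-lemma argument, the choice of $s$ guarantees that $A$ is statistically within $\negl(n)$ of uniform, so $A$ is a valid average-case SIS instance on which $\mathcal{S}$ succeeds with non-negligible probability.

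Next, I would feed $A$ to $\mathcal{S}$, which returns a nonzero $z \in \mathbb{Z}^m$ with $\|z\|_2 \leq \beta$ and $Az \equiv 0 \pmod q$. The congruence precisely states that $\sum_i z_i x_i \in q\Lambda$, so $v := \tfrac{1}{q}\sum_i z_i x_i$ lies in $\Lambda$. A subgaussian tail bound for the discrete Gaussian yields $\|v\|_2 \leq \poly(n) \cdot s \cdot \beta / q$; using the hypothesis $q \geq n \beta$ and the fact that $s$ can be taken to be $\poly(n) \cdot \lambda_1(\Lambda)$, this produces a vector $v \in \Lambda$ of length at most $\gamma \cdot \lambda_1(\Lambda)$ for a suitable $\gamma \in \poly(n)$.

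The main obstacle will be ruling out $v = 0$. Since $\mathcal{S}$ observes only the residues $A \bmod q$, by the smoothing lemma the posterior distribution of each $x_i$ conditioned on its column of $A$ is a shifted discrete Gaussian with sufficient min-entropy, and in particular is not concentrated at a single point. A standard pigeonhole (or second-moment) argument then shows that for any fixed or adversarially chosen $z$ with $\|z\|_2 \leq \beta$, the collision probability $\Pr[\sum_i z_i x_i = 0 \mid A]$ is negligible. Combining this with the non-negligible success probability of $\mathcal{S}$ completes the reduction to $\mathrm{SVP}_\gamma$.
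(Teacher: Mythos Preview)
The paper does not give its own proof of this theorem: it is stated with a citation to \cite{cryptoeprint:2013/069} and used as a black box to justify Assumption~\ref{asm:strong}. So there is no ``paper's proof'' to compare against; your sketch is to be measured against the cited literature (Ajtai, Micciancio--Regev, and the Micciancio--Peikert refinement referenced here).

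Your outline follows the standard worst-case to average-case template, and the high-level structure---sample Gaussians from $\Lambda$, reduce mod $q\Lambda$ to obtain a near-uniform $A$, call the SIS oracle, and combine the samples via $z$---is correct. However, there is a real gap in the step ``invoke a discrete Gaussian sampler on $\Lambda$ \ldots\ with parameter $s$ chosen just above the smoothing parameter $\eta_\varepsilon(\Lambda)$.'' Efficient sampling from $D_{\Lambda,s}$ requires a basis whose Gram--Schmidt vectors are all shorter than roughly $s$; if you could already sample at $s \approx \eta_\varepsilon(\Lambda) \leq \sqrt{n}\,\lambda_n(\Lambda)$, you would essentially already possess the short vectors you are trying to find. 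The actual reductions resolve this circularity by an \emph{iterative} procedure: start from an arbitrary (long) basis, sample at a correspondingly large width, use the SIS oracle to produce vectors shorter by a $\poly(n)$ factor, refine the basis, and repeat $\O{\log}$ times until the width reaches the target. Your single-shot description omits this loop, and without it the argument does not go through.

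A secondary point: the worst-case problem one lands on is most naturally $\mathrm{SIVP}_\gamma$ (or $\mathrm{GapSVP}_\gamma$), since each call produces one short vector and you need $n$ independent ones to continue the iteration; the statement's phrasing in terms of $\mathrm{SVP}_\gamma$ should be read through those reductions.
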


Theorem \ref{thm:SIS} bases the hardness of the SIS problem on the SVP$_\gamma$ problem, which is one of the most well-studied lattice problems with many proposed algorithms. The best known algorithm for SVP$_\gamma$ with $\gamma = \poly(n)$ is due to \cite{10.1145/2746539.2746606} and runs in $\tilde{\mathcal{O}}(2^n)$ time.

\subsection{SIS Hardness Assumption}
\label{sec:sis}

We assume that the white-box adversary is computationally bounded in such a way that it cannot solve the SIS problem with non-negligible probability. For the purposes of this paper, we consider a time bound based on the state-of-the-art complexity result for lattice problems.

\begin{asm} 
\label{asm:strong}
    Given $n \in \mathbb{N}$, for some $m, \beta, q \in \poly(n)$ and $q \geq n\cdot \beta$, no $o(2^{n})$ time-bounded adversary can solve the SIS problem $\mathsf{SIS}_{n, m, p, \beta}$ with non-negligible probability.
\end{asm}

As shown above, 
our instance of the SIS problem is at least as hard as the approximation problem SVP$_{\poly(n)}$, for which the best-known algorithm runs in $\tilde{\mathcal{O}}(2^n)$ time. 

We have the following two crucial lemmas.  

\begin{lem} 
\label{lem:vec}
    Under Assumption \ref{asm:strong}, given a uniformly random matrix $A \in \mathbb{Z}^{n \times m}_q$ for $q, m, \beta \in \poly(n)$ and $q \geq n\cdot \beta$, if a vector $x \in \mathbb{Z}^m_\beta$ is generated by an $o(2^{n})$-time adversary, then with probability at least $ 1-\textrm{negl}(n)$, there does not exist a $k$-sparse vector $y \in \mathbb{Z}^m_\beta$ for which $x \neq y\mod q$ yet $Ax = Ay \mod q$, for $k \in o(\frac{n}{\log n})$.
\end{lem}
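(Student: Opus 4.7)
The plan is to prove the lemma by a reduction to the SIS problem: if the conclusion fails, then any adversary witnessing the failure can be converted into an efficient SIS solver, contradicting Assumption~\ref{asm:strong}. Concretely, suppose for contradiction that there is an $o(2^{n})$-time adversary $\calA$ that on input a uniformly random $A \in \mathbb{Z}_q^{n\times m}$ outputs $x \in \mathbb{Z}_\beta^m$ such that, with non-negligible probability over $A$ and $\calA$'s internal coins, there exists a $k$-sparse $y \in \mathbb{Z}_\beta^m$ with $x \not\equiv y \pmod q$ and $Ax \equiv Ay \pmod q$. I would build an SIS solver $\calB$ that receives $A$, runs $\calA(A)$ to obtain $x$, then exhaustively enumerates every $k$-sparse $y \in \mathbb{Z}_\beta^m$, outputting $z := x - y$ as soon as it finds a $y$ satisfying $Az \equiv 0 \pmod q$ and $x \not\equiv y \pmod q$.

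First I would verify that any $z$ produced this way is a valid SIS witness. Because $|x_i|, |y_i| \le \beta$ and $q \ge n\beta > 2\beta$, each integer entry $z_i = x_i - y_i$ lies in $[-2\beta, 2\beta]$, and so $z$ reduced modulo $q$ is nonzero iff $z$ is a nonzero integer vector; in particular the condition $x \not\equiv y \pmod q$ guarantees $z \neq 0$. Moreover, $\|z\|_2 \le 2\beta\sqrt m$, which remains polynomial in $n$, and $Az \equiv 0 \pmod q$. Hence $z$ solves an SIS instance whose norm bound is polynomially related to $\beta$, so it still falls within the hardness regime of Assumption~\ref{asm:strong} (applied at the rescaled polynomial parameters $(n, m, q, 2\beta\sqrt m)$, which continue to satisfy the required polynomial relation between $q$ and the norm bound).

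Next I would bound the running time of $\calB$. The call to $\calA$ costs $o(2^n)$. The number of $k$-sparse vectors $y \in \mathbb{Z}_\beta^m$ is at most $\binom{m}{k}(2\beta+1)^k \le (2m\beta)^k$, and
\[
\log\bigl((2m\beta)^k\bigr) \;=\; O(k \log n) \;=\; o(n),
\]
precisely because $k = o(n/\log n)$ and $m, \beta \in \poly(n)$. Therefore the enumeration plus per-candidate check runs in $2^{o(n)} \cdot \poly(n)$ time, and the combined runtime of $\calB$ is $o(2^n) + 2^{o(n)} \cdot \poly(n) = o(2^n)$. Whenever $\calA$ succeeds, $\calB$ finds a witness and outputs a valid SIS solution, so $\calB$ succeeds with non-negligible probability, contradicting Assumption~\ref{asm:strong}.

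The main obstacle, and the reason for the sparsity threshold $k = o(n/\log n)$ in the statement, is precisely to keep the enumeration subexponential: it is exactly the regime in which the count $(2m\beta)^k$ of $k$-sparse low-norm vectors becomes $2^{o(n)}$, so that a computationally unbounded existence statement about $y$ can be converted, within the SIS adversary's time budget, into an algorithmic recovery of $y$. A secondary technical point is the mild blow-up of the recovered SIS solution's norm from $\beta$ to $2\beta\sqrt{m}$; this is cosmetic and absorbed into the polynomial slack allowed by Assumption~\ref{asm:strong}.
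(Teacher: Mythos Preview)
Your proof is correct and follows essentially the same approach as the paper: reduce to SIS by enumerating all $k$-sparse candidates $y$ (which is $2^{o(n)}$-time precisely when $k=o(n/\log n)$) and outputting $x-y$ as a short kernel vector. Your write-up is in fact more careful than the paper's in explicitly bounding $\|z\|_2 \le 2\beta\sqrt{m}$ and arguing that this rescaled norm bound still lies in the polynomial regime covered by Assumption~\ref{asm:strong}.
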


\begin{rem}
    We note that given a random matrix $A \in \mathbb{Z}^{n \times m}_q$, when both $x$ and $y$ are $k$-sparse, we can argue information-theoretically by a union bound that with high probability all sparse $x \neq y$ with bounded entries satisfy $Ax \neq Ay$. However, there may exist a binary vector $x$ which is not $k$-sparse, and a $k$-sparse $y$ with bounded entries such that $Ax = Ay$. In this case we need the SIS assumption to show that it is hard for an adversary to find such $x$ and fool the algorithm.
\end{rem}

\begin{proof}
    If an adversary were to find a vector $y \in \mathbb{Z}^m_\beta$ for which $x \neq y\mod q$ yet $Ax = Ay\mod q$, then it would be able to solve the SIS problem by outputting $(x-y)\mod q$, which is a short (i.e., polynomially bounded integer entry), non-zero vector in the kernel of $A$. Because the entries of $y$ are bounded by $q$, it takes at most $\mathcal{O}(q^k\cdot{m \choose k}) \leq \poly(n)^k$ time for an adversary to try all $k$-sparse vectors $y \in \mathbb{Z}^m_\beta$. So it must be that for $k \in o(\frac{n}{\log n})$, such a $k$-sparse vector $y$ does not exist with probability greater than $\textrm{negl}(n)$, as otherwise an $o(2^{n})$-time adversary would be able to find it by enumerating all candidates and use it to solve the SIS problem with non-negligible probability. 
\end{proof}

We similarly have the following lemma:

\begin{lem} 
\label{lem:mat}
    Under Assumption \ref{asm:strong}, given a uniformly random matrix $A \in \mathbb{Z}^{n \times m}_q$ for $q, m, \beta \in \poly(n)$ and $q \geq n\cdot \beta$, if a matrix $X \in \mathbb{Z}^{\sqrt{m} \times \sqrt{m}}_\beta$ is generated by an $o(2^{n})$-time adversary, then with probability at least $ 1-negl(n)$, there does not exist a matrix $Y \in \mathbb{Z}^{\sqrt{m} \times \sqrt{m}}_\beta$ with $rank(Y) \leq k$, such that $X \neq Y\mod q$ and $Ax = Ay\mod q$, for $x, y$ being the vectorizations of $X$ and $Y$, respectively, and $k \in o(\frac{n}{\sqrt{m}\log n})$.
\end{lem}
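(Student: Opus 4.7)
The plan is to follow the reduction template of Lemma~\ref{lem:vec}: if a violating low-rank $Y$ existed with non-negligible probability, then an $o(2^{n})$-time adversary with access to $X$ and $A$ could produce a SIS solution, contradicting Assumption~\ref{asm:strong}. Concretely, $z := \mathrm{vec}(X-Y) \bmod q$ is nonzero (since $X \neq Y \bmod q$ and $q > 2\beta$), lies in $\ker(A) \bmod q$, and has $\ell_2$ norm at most $2\beta\sqrt{m}$, so it is a valid SIS witness (absorbing the $2\sqrt{m}$ factor into the polynomial norm parameter of the SIS assumption). The only new ingredient beyond Lemma~\ref{lem:vec} is to bound the size of the candidate set the adversary must enumerate when searching for $Y$.

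The key step is to parametrize low-rank candidates by factorizations $Y = UV \bmod q$, with $U \in \mathbb{Z}_q^{\sqrt{m}\times k}$ and $V \in \mathbb{Z}_q^{k\times\sqrt{m}}$. Every integer matrix $Y$ with $\mathrm{rank}_{\mathbb{Q}}(Y) \leq k$ admits an integer factorization $Y = U_0 V_0$: the lattice $L := \mathrm{colspan}_{\mathbb{Q}}(Y) \cap \mathbb{Z}^{\sqrt{m}}$ has rank at most $k$, so one may take $U_0 \in \mathbb{Z}^{\sqrt{m}\times k}$ to be a basis of $L$ and $V_0 \in \mathbb{Z}^{k\times \sqrt{m}}$ the (integer) coefficient matrix expressing each column of $Y$ in this basis. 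Reducing modulo $q$ exhibits $Y \bmod q = (U_0 \bmod q)(V_0 \bmod q)$ in the image of the bilinear map $(U,V) \mapsto UV \bmod q$, whose domain has cardinality $q^{2k\sqrt{m}} = n^{O(k\sqrt{m})} = 2^{O(k\sqrt{m}\log n)}$. Under the hypothesis $k \in o(n/(\sqrt{m}\log n))$, this is $2^{o(n)}$, fitting inside the $o(2^n)$ time budget.

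Putting the pieces together, the adversary sweeps over all pairs $(U,V)$, computes $Y' = UV \bmod q$, lifts it to an integer matrix with entries in $\{-(q-1)/2,\ldots,(q-1)/2\}$, and accepts iff $\|Y'\|_\infty \le \beta$, $Y' \neq X$, and $A \cdot \mathrm{vec}(Y') \equiv A \cdot \mathrm{vec}(X) \pmod{q}$; on acceptance it outputs $\mathrm{vec}(X - Y')$ as a SIS solution. Any $Y$ satisfying the lemma's hypothesis is captured by at least one such $(U,V)$, so a non-negligible success probability for the lemma's would-be adversary would transfer to a non-negligible SIS success probability in $o(2^n)$ time, contradicting Assumption~\ref{asm:strong}. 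The main obstacle is precisely this counting step: a naive enumeration over $k$ basis columns together with \emph{rational} combinations expressing the remaining columns would require tightly controlling the integer points of a $k$-dimensional rational subspace inside a bounded box, which is delicate. Working modulo $q$ and parametrizing by the integer-mod-$q$ factors $U,V$ replaces the delicate lattice-point count with an elementary estimate on the image of a bilinear map, which is what yields the clean $2^{O(k\sqrt{m}\log n)}$ bound and hence the $k \in o(n/(\sqrt{m}\log n))$ threshold in the statement.
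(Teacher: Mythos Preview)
Your proof is correct and follows the same high-level reduction as the paper: enumerate all candidate low-rank $Y$, and if a collision exists output $\mathrm{vec}(X-Y)\bmod q$ as a SIS witness. The only substantive difference is in the enumeration. The paper counts by first choosing which $k$ columns of $Y$ are linearly independent, then their values in $\mathbb{Z}_\beta^{\sqrt{m}}$, and finally the coefficients expressing each of the remaining $\sqrt{m}-k$ columns as linear combinations of the chosen $k$; it asserts there are $\poly(n)^k$ such coefficient vectors, giving the same $\poly(n)^{k\sqrt{m}}$ total. Your parametrization via $Y \equiv UV \pmod q$ with $U\in\mathbb{Z}_q^{\sqrt{m}\times k}$, $V\in\mathbb{Z}_q^{k\times\sqrt{m}}$ lands on the same $2^{O(k\sqrt{m}\log n)}$ bound, and your lattice argument (taking $U_0$ a basis of $\mathrm{colspan}_{\mathbb{Q}}(Y)\cap\mathbb{Z}^{\sqrt{m}}$) cleanly justifies why the factorization has integer entries that can then be reduced mod~$q$. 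As you note, the paper's count implicitly assumes the rational coefficients expressing dependent columns in terms of independent ones are $\poly(n)$-bounded integers, which is not immediate from $\|Y\|_\infty\le\beta$; your route sidesteps that by working directly with the $\bmod\ q$ factorization, at the cost of a factor of~$2$ in the exponent that is absorbed into the $O(\cdot)$.
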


\begin{proof} As in the proof of Lemma \ref{lem:vec}, an adversary is able to try all matrices $Y \in \mathbb{Z}^{\sqrt{m} \times \sqrt{m}}_\beta$ with $rank(Y) \leq k$ in $\poly(n)^{\sqrt{m}k}$ time. This is because there are $\mathcal{O}{\sqrt{m} \choose k}$ ways of positioning the linearly independent columns of $Y$, with $\poly(n)^{\sqrt{m}k}$ choices for their values when $\beta \in \poly(n)$. All remaining columns are linear combinations of the independent columns. Since there are $\poly(n)^k$-many possible combinations of coefficients and we choose $(\sqrt{m}$-$k)$ of them, there are $\poly(n)^{(\sqrt{m}-k)k}$ choices for the dependent columns. Therefore in total we have $\poly(n)^{\sqrt{m}k}$-many candidate matrices. For $k \in o(\frac{n}{\sqrt{m}\log n})$, there exists an $o(2^{n})$-time adversary that is able to iterate through all candidate matrices. Thus,  under Assumption \ref{asm:strong}, with overwhelming probability such a $Y$ does not exist; otherwise, given $X$ and $Y$, an adversary can easily solve the SIS problem by outputting $(x$-$y)\mod q$.
\end{proof}

\section{Vector Recovery}

\subsection{$k$-Sparse Recovery Algorithm }

\begin{theorem}
    Under Assumption \ref{asm:strong}, given a parameter $k \in \Theta(\frac{n^{c}}{\log n})$ for an arbitrary constant $c > 0$, and a length-$n$ input vector with integer entries bounded by  $\poly(n)$, there exists a streaming algorithm robust against $o(n^k)$ time-bounded white-box adversaries that determines if the input is $k$-sparse, and if so, recovers a $k$-sparse vector using $\tilde{\mathcal{O}}(k)$ bits of space in the random oracle model.
\end{theorem}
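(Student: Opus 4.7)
The plan is to maintain a linear sketch $s = Ax \bmod q$ with respect to a random SIS matrix $A \in \mathbb{Z}_q^{n' \times n}$ pulled from the random oracle, and at query time to recover $x$ by brute-force searching over all $k$-sparse candidates, relying on \lemref{vec} to guarantee that no spurious $k$-sparse collision exists. Concretely, I would choose the SIS row-dimension $n' = \Theta(k \log n)$, modulus $q \in \poly(n)$ with $q \geq n \cdot \beta$ where $\beta = \poly(n)$ is the entry bound, and sample $A$ column-by-column on the fly via the random oracle so that no rows of $A$ ever need to be stored. Processing an update $(i_t,\delta_t)$ reduces to reading the $i_t$-th column of $A$ and performing the modular update $s \leftarrow s + \delta_t \cdot A_{:,i_t} \bmod q$; the total space is $n' \log q = \tilde{\mathcal{O}}(k)$ bits, regardless of how many updates have occurred.

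At the end of the stream (or whenever a query is made), the recovery step enumerates subsets $S \subseteq [n]$ with $|S| \leq k$ and, for each, solves the overdetermined linear system $A_S y_S \equiv s \pmod q$ over $\mathbb{Z}_q$, rejecting any $y_S$ that does not lie in $\mathbb{Z}_\beta^{|S|}$. If some candidate succeeds, the algorithm outputs the corresponding $k$-sparse vector $y$; otherwise it declares the input non-sparse. Completeness is immediate: when the true input is $k$-sparse, $y = x$ is a valid candidate, and by \lemref{vec} applied with SIS dimension $n'$ (since $k = o(n'/\log n')$ under our parameter choice) any other $k$-sparse $y'$ with $Ay' \equiv Ax \pmod q$ would yield a short kernel vector $x - y'$, contradicting the assumed hardness for $o(2^{n'}) = o(n^k)$-time adversaries; hence the unique recovered vector is $x$ itself. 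Soundness follows from the same lemma: if the adversarially chosen $x$ is not $k$-sparse, then any $k$-sparse $y$ with $Ay \equiv Ax \pmod q$ would, together with $x$, provide a short nonzero kernel element, so with overwhelming probability no such $y$ exists and the ``not $k$-sparse'' verdict is returned.

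The main subtle point, and the place where I would focus the write-up, is the careful interplay of parameters. We need $n'$ simultaneously large enough that $k = o(n'/\log n')$ so that \lemref{vec} applies, small enough that $n' \log q = \tilde{\mathcal{O}}(k)$, and such that the induced adversary bound $o(2^{n'})$ matches the claimed $o(n^k)$; taking $n' = \Theta(k \log n)$ with a slight $\polylog$ slack achieves all three, using $k \geq n^c/\log n$ to ensure $n'$ is a polynomial in $n$ (so that \thmref{SIS} and \asm{strong} are applicable with hardness parameter $n'$). A secondary point to verify is that the algorithm's own decoding procedure, which may take $\binom{n}{k}\poly(n) = n^{O(k)}$ time, does not itself need to fit in the adversary's time budget; this is fine because the time bound in the theorem applies only to the adversary, and the exponential-time decoder can be implemented in $\tilde{\mathcal{O}}(k)$ working space by iterating subsets and solving each linear system in place.
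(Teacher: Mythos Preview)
Your proposal is correct and matches the paper's approach essentially line for line: maintain the SIS sketch $Ax \bmod q$ with row dimension $\tilde{\Theta}(k\log n)$ (the paper writes this as $f(k)\cdot\log n$ with $f(k)\in\omega(k)\cap\tilde{O}(k)$, which is exactly your ``$\Theta(k\log n)$ with polylog slack''), generate $A$'s columns on the fly via the random oracle, and at query time brute-force over $k$-sparse candidates, invoking \lemref{vec} for both completeness and soundness. The only cosmetic difference is that the paper enumerates all $k$-sparse vectors with entries in $[-\beta,\beta]$ directly rather than iterating over supports and solving a linear system, but this does not affect correctness or the space bound.
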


{\bf Notation:}
    A function $f(k)$ is said to be in $\omega(k)$ if for all real constants $c > 0$, there exists a constant $k_0 > 0$ such that $f(k) > c\cdot k$ for every $k \geq k_0$.

\begin{algorithm}
\caption{Recover-Vector($n$, $m$, $k$)}
\label{alg:k-rec}

\KwData{$m$ integer updates $u_t$ to a length-$n$ vector with entries bounded in $\beta \in \poly(n)$;
\newline
A modulus $q \in \poly(n)$ with $q \gg \beta$.}

Let $f(k)$ be a function in $\omega(k)$ and $\Tilde{\mathcal{O}}(k)$. Initialize a uniformly random matrix $A \in \mathbb{Z}^{(f(k)\cdot \log n) \times n}_{q}$ for $q \in \poly(n)$ and a zero vector $v$ of length $k\cdot \log n$\;

\ForEach{update $u_t$ with $t\in [m]$}{
     Update $v$ by adding $u_t\cdot A_i$ to it, where $A_i$ is the $i^{th}$ column of $A$, and where the stream update changes the $i^{th}$ coordinate by an additive amount $u_t \in \mathbb{Z}_q$\;}

\ForEach{$k$-sparse vector $y$ with entries $\in [-\beta, \beta]$}{
    \If{$Ay = v\mod q$}{
    \Return $y$\;
    }}

\Return $None$\;

\end{algorithm}

\begin{proof}
    Algorithm \ref{alg:k-rec} decides and recovers a $k$-sparse vector using $\tilde{\mathcal{O}}(k)$ bits. The algorithm receives a stream of integer updates to an underlying vector, whose entries are assumed to be at most $\beta \in poly(n)$ at any time. Thus we can interpret the stream updates to be mod $q$ for $q, \beta \in \poly(n)$ and $q \geq n \cdot \beta$.
    
    When an input vector $x$ is $k$-sparse, for a uniformly random sketching matrix $A$, it is guaranteed by Lemma \ref{lem:vec} that $Ay = Ax\mod q$ implies $y=x$. Therefore, in the $k$-sparse case, by enumerating over all $k$-sparse vectors, Algorithm \ref{alg:k-rec} correctly recovers the input vector $y = x$. On the other hand, for inputs that have sparsity larger than $k$, Lemma \ref{lem:vec} guarantees that during post-processing, the enumeration over $k$-sparse vectors will not find a vector $y$ satisfying $Ay=v\mod q$. Thus, in this case Algorithm \ref{alg:k-rec} outputs $None$ as desired. In the random oracle model, we can generate the columns of a uniformly random matrix $A$ on the fly. Then, Algorithm \ref{alg:k-rec} only stores a vector of length $f(k)\cdot \log n$ with entries bounded by $\poly(n)$, so $\tilde{\mathcal{O}}(k)$ bits of space.
\end{proof}

\begin{rem} 
\label{rem:vec}
With roughly $k$ space, any white-box adversarially robust algorithm for $k$-sparse recovery has to assume that the adversary is at most $n^k$-time bounded. Otherwise, given that an algorithm using $k$ words of memory has at most $n^k$ states, for a $k'$-sparse input $x$ with $k'$ slightly larger than $k$, there exists an $x' \neq x$ that goes to the same state as $x$ with high probability. Hence, the adversary would have enough time to find $x$ and $x'$. If the adversary inserts either $x$ or $x'$ in the stream, followed by $-x$, the algorithm cannot tell if the input is $0$ or $x'-x$. Thus, our algorithm is nearly optimal in the sense that it uses $\tilde{\mathcal{O}}(k)$ bits assuming the adversary is $o(2^{k\log n}) = o(n^k)$-time bounded.
\end{rem}

\subsection{Fast $k$-Sparse Recovery}

Algorithm \ref{alg:k-rec} enumerates over all possible $k$-sparse vectors in  post-processing, which is time-inefficient. We now give a faster version of $k$-sparse recovery, which is also capable of identifying whether the input is $k$-sparse. In parallel we run an existing deterministic $k$-sparse recovery scheme that has fast update time assuming the input is $k$-sparse.

\begin{theorem}[\cite{DeterministicCompressedSensing}]
\label{thm:fast}
    There exists a deterministic algorithm that recovers a $k$-sparse length-$n$ vector in a stream using $\tilde{\mathcal{O}}(k)$ bits of space and $\poly(n)$ time.
\end{theorem}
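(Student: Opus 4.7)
The plan is to invoke a standard deterministic compressed sensing construction based on a Vandermonde (Reed--Solomon) parity-check matrix, combined with Prony's method for the decoding step. Fix a prime $p \in \poly(n)$ exceeding twice the entry bound $\beta$, and pick distinct elements $\alpha_1, \ldots, \alpha_n \in \mathbb{F}_p$. Define the $2k \times n$ matrix $A$ over $\mathbb{F}_p$ by $A_{j,i} = \alpha_i^{j-1}$. Any choice of $2k$ columns of $A$ forms a $2k \times 2k$ Vandermonde matrix, hence is nonsingular, so the linear map $x \mapsto Ax$ is injective on $2k$-sparse vectors. In particular, no two distinct $k$-sparse integer vectors with entries bounded by $\beta$ can share a sketch, since their difference is $2k$-sparse and lies in the kernel mod $p$.

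During the stream, the algorithm maintains $v = Ax \pmod p$. For each update $(i, \delta)$, it generates the column $A_i = (1, \alpha_i, \alpha_i^2, \ldots, \alpha_i^{2k-1})$ on the fly from $i$ and adds $\delta \cdot A_i$ to $v$. Storing $v$ costs $O(k \log p) = \tilde{\mathcal{O}}(k)$ bits, and each update takes polylogarithmic time. No randomness is required, since $A$ is determined by the fixed sequence $\alpha_1, \ldots, \alpha_n$.

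For recovery, observe that $v_j = \sum_{\ell \in \mathrm{Supp}(x)} x_\ell\, \alpha_\ell^{j-1}$, so the entries of $v$ are power sums of the unknown support elements. Prony's method recovers $x$ in three classical steps: form the $k \times k$ Hankel matrix $H_{a,b} = v_{a+b}$ from the $v_j$'s, solve a linear system to extract the locator polynomial $\prod_{\ell \in \mathrm{Supp}(x)}(z - \alpha_\ell)$, factor this degree-$k$ polynomial over $\mathbb{F}_p$ to identify $\mathrm{Supp}(x)$, and finally solve a $k \times k$ Vandermonde system for the values $x_\ell$. All three steps run in $\poly(k, \log p) = \poly(n)$ deterministic time using standard finite-field arithmetic and polynomial factorization algorithms.

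The main obstacle to address is the interplay between integer arithmetic in the stream and modular arithmetic in the sketch. This is handled by choosing $p > 2\beta$: since each coordinate of the true $k$-sparse vector lies in $[-\beta, \beta]$, its residue mod $p$ uniquely determines its integer value, so Prony's output over $\mathbb{F}_p$ lifts unambiguously to the integers. The scheme is linear over $\mathbb{F}_p$ and therefore handles both insertions and deletions without modification, and the total space and recovery time match the claimed $\tilde{\mathcal{O}}(k)$ and $\poly(n)$ bounds.
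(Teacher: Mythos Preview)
The paper does not supply its own proof of this theorem; it is quoted from the cited reference and used as a black box. Your proposal gives a self-contained and essentially correct construction via Reed--Solomon/Vandermonde syndrome decoding (Prony's method), which is indeed the standard mechanism underlying such results.

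Two small points deserve tightening. First, the per-update cost is $\tilde{O}(k)$ rather than polylogarithmic, since generating the column $A_i = (1,\alpha_i,\ldots,\alpha_i^{2k-1})$ requires $2k$ field multiplications; this is still $\poly(n)$ and so meets the stated bound. Second, your appeal to ``standard polynomial factorization algorithms'' in $\poly(k,\log p)$ deterministic time is not justified: deterministic factorization over $\mathbb{F}_p$ in time polynomial in $\log p$ is not known unconditionally. Fortunately you do not need general factorization here. The roots of the locator polynomial are guaranteed to lie in the known set $\{\alpha_1,\ldots,\alpha_n\}$, so brute-force evaluation at all $n$ candidate points, costing $O(nk)$ field operations, identifies the support deterministically in $\poly(n)$ time. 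With that substitution the argument is complete.
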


When the input vector is $k$-sparse, the algorithm in Theorem \ref{thm:fast} outputs the input vector. However, when taking in an input vector with sparsity larger than $k$, this algorithm erroneously assumes the input to be $k$-sparse and has no guarantees, in which case the user cannot tell if the output is a correct recovery or not. 
To fix this, we run the two recovery schemes from Algorithm \ref{alg:k-rec} and Theorem \ref{thm:fast} in parallel. Any deterministic recovery scheme is robust against white-box adversaries, and therefore using the algorithm of Theorem \ref{thm:fast} as a subroutine does not break our robustness. 

\begin{theorem}
    Under Assumption \ref{asm:strong}, given a parameter $k \in \Theta(\frac{n^{c}}{\log n})$ for an arbitrary constant $c > 0$, and a length-$n$ input vector with integer entries bounded by  $\poly(n)$, there exists a streaming algorithm robust against $o(n^k)$ time-bounded white-box adversaries that determines if the input is $k$-sparse, and if so, recovers a $k$-sparse vector using $\tilde{\mathcal{O}}(k)$ bits of space and $\poly(n)$ time in the random oracle model.
\end{theorem}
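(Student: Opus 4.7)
The plan is to run two recovery schemes in parallel on the same stream of updates: Algorithm \ref{alg:k-rec}, which maintains the SIS sketch $v = Ax \bmod q$ using $\tilde{\mathcal{O}}(k)$ bits, and the deterministic scheme of Theorem \ref{thm:fast}, which after the stream produces a candidate $k$-sparse vector $\hat{y}$ in $\poly(n)$ time but cannot itself detect whether the input was $k$-sparse. The key conceptual shift is to use the SIS sketch as an efficient \emph{verifier} rather than a \emph{recoverer}: instead of enumerating over all $k$-sparse vectors as in Algorithm \ref{alg:k-rec}, we take the single candidate $\hat{y}$ produced by the deterministic scheme and check whether $A\hat{y} = v \bmod q$. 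This check is a single matrix--vector product modulo $q$, and since $A$ has $\tilde{\mathcal{O}}(k)$ rows and $\hat{y}$ has $\tilde{\mathcal{O}}(k)$ nonzeros (or we can compute $A\hat{y}$ as a sum of $k$ columns of $A$ generated by the random oracle), it runs in $\poly(n)$ time.

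For correctness, there are two cases. If the underlying vector $x$ is $k$-sparse, Theorem \ref{thm:fast} guarantees $\hat{y} = x$, so certainly $A\hat{y} = v \bmod q$ and we output $\hat{y}$. If $x$ is not $k$-sparse, then $\hat{y}$ could be an arbitrary vector; however, Lemma \ref{lem:vec} implies that under Assumption \ref{asm:strong}, with all but negligible probability no $k$-sparse vector $\hat{y} \in \mathbb{Z}_\beta^m$ with $\hat{y} \neq x \bmod q$ satisfies $A\hat{y} = Ax \bmod q$. Since $x$ is not $k$-sparse, any $k$-sparse candidate automatically satisfies $\hat{y} \neq x \bmod q$, and therefore the verification $A\hat{y} = v \bmod q$ fails and we correctly output $None$.

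For the resource bounds, both subroutines use $\tilde{\mathcal{O}}(k)$ bits, each update is processed in $\poly(n)$ time by both (for the SIS sketch we add one column of $A$ generated by the random oracle, and Theorem \ref{thm:fast} already runs in $\poly(n)$ per step), and post-processing consists of one invocation of the deterministic recovery plus one verification, both $\poly(n)$. Robustness against $o(n^k)$-time white-box adversaries is inherited from the SIS sketch: the deterministic subroutine is trivially white-box robust since its behavior does not depend on any randomness, so running it in parallel gives the adversary no extra leverage beyond attacking the SIS sketch alone, and Lemma \ref{lem:vec} applies verbatim.

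The only subtle point, and the main place one must be careful, is that the verifier must reject \emph{every} non-$k$-sparse $x$, not just the truthful $x$; here the strength of Lemma \ref{lem:vec} is exactly what is needed, because it rules out the existence of \emph{any} $k$-sparse collision $\hat{y}$ with the adversary's $x$ modulo $q$, regardless of which particular $\hat{y}$ the deterministic subroutine happens to produce. Thus replacing the exhaustive enumeration of Algorithm \ref{alg:k-rec} by a single verification query against the fast deterministic recovery yields the desired $\poly(n)$-time, $\tilde{\mathcal{O}}(k)$-space white-box robust algorithm.
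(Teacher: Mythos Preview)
Your proposal is correct and matches the paper's approach essentially line for line: the paper's Algorithm \ref{alg:fast} runs the deterministic recovery of Theorem \ref{thm:fast} in parallel with the SIS sketch, then verifies the single candidate $y^*$ by checking that it is $k$-sparse, has entries bounded by $\beta$, and satisfies $Ay^* = v \bmod q$, invoking Lemma \ref{lem:vec} for correctness exactly as you do. The only small addition in the paper's presentation is the explicit check $\lVert y^* \rVert_\infty \leq \beta$ before applying Lemma \ref{lem:vec}, which you implicitly assume but should state as part of the verification step.
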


\begin{algorithm}
\caption{Fast-Recover($n$, $m$, $k$)}
\label{alg:fast}

\KwData{$m$ integer updates $u_t$ to a length-$n$ vector with entries bounded in $\beta \in \poly(n)$;
\newline
A modulus $q \in \poly(n)$ with $q \gg \beta$.}

Initiate an instance of the fast $k$-sparse recovery scheme $\mathcal{F}(\cdot)$ from Theorem \ref{thm:fast}\;

Let $f(k)$ be a function in $\omega(k)$ and $\Tilde{\mathcal{O}}(k)$. Initialize a uniformly random matrix $A \in \mathbb{Z}^{(f(k)\cdot \log n) \times n}_{q}$ for $q \in \poly(n)$ and a zero vector $v$ of length $k\cdot \log n$\;

\ForEach{update $u_t$ with $t\in [m]$}{
    Feed the update to the initiated instance $\mathcal{F}(\cdot)$\;
    
     Update $v$ by adding $u_t\cdot A_i$ to it, where $A_i$ is the $i^{th}$ column of $A$, and where the stream update changes the $i^{th}$ coordinate by an additive amount $u_t \in \mathbb{Z}_q$\;}

$y^*$ $\gets$ $eval(\mathcal{F}(\cdot))$\;

\If{$y^*$ is $k$-sparse \textbf{andalso} $\lVert y^*\rVert_\infty \leq \beta$ \textbf{andalso} $Ay^* = v\mod q$}{
    \Return $y^*$\;
}

\Return $None$\;

\end{algorithm}

\begin{proof}
    
    Algorithm \ref{alg:fast} gives a fast version of $k$-sparse recovery. By running the two schemes in parallel, at the end of the stream, we can check the validity of its output as follows: if the fast algorithm recovers a vector $y^*$ which is $k$-sparse and has the same SIS sketch as the input, i.e., $(Ay^* = v\mod q)$, then by the correctness of Algorithm \ref{alg:k-rec}, $y^*$ equals the input. On the other hand, if $y^*$ is not $k$-sparse or its sketch does not match the SIS sketch $v$, then it must be that the input was not $k$-sparse. In both cases, Algorithm \ref{alg:fast} is $\poly(n)$ time and returns the correct result.
    Both recovery schemes from Algorithm \ref{alg:k-rec} and Theorem \ref{thm:fast} use  $\tilde{\mathcal{O}}(k)$ bits of space. Thus, Algorithm \ref{alg:fast} uses $\tilde{\mathcal{O}}(k)$ bits as well. Evaluating the output of the fast recovery scheme $(eval(\mathcal{F(\cdot)}))$ and  comparing the sketches takes $\poly(n)$ time, so the entire algorithm takes $\poly(n)$ time.
\end{proof}

\subsection{Applications of $k$-Sparse Recovery}

\subsubsection{Estimating the $\ell_0$ Norm}

Using our $k$-sparse recovery algorithm as a subroutine, we can construct an efficient $\ell_0$ estimation algorithm. This algorithm gives an $n^\eps$-approximation to the $\ell_0$ norm of a vector, whose entries are assumed to be bounded by $\poly(n)$. 

\begin{theorem}
     Under Assumption \ref{asm:strong}, for constant $\eps < 1$,  there exists a streaming algorithm robust against $o(n^{n^{1-\eps}})$ time-bounded white-box adversaries, which estimates the $L_0$ norm of a length-$n$ vector in the stream within a multiplicative factor of $n^{ \epsilon}$ using $\tilde{\mathcal{O}}(n^{1-\epsilon})$ bits of space and $\poly(n)$ time in the random oracle model.
 \end{theorem}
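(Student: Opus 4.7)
The plan is to reduce $\ell_0$-estimation to a single invocation of the fast $k$-sparse recovery algorithm (Algorithm \ref{alg:fast}) with sparsity parameter $k = \Theta(n^{1-\eps}/\log n)$. The streaming algorithm maintains precisely the sketch of Algorithm \ref{alg:fast} throughout the stream; on any query it runs the recovery routine and outputs $\|y^*\|_0$ if that routine returns a vector $y^*$, and outputs a fixed value $\hat L := \lfloor\sqrt{k\cdot n}\rfloor$ if it returns $\mathrm{None}$.

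For correctness I would split on the true sparsity $\|x\|_0$ of the underlying vector. In the case $\|x\|_0 \leq k$, Algorithm \ref{alg:fast} exactly reconstructs $y^* = x$, so the reported $\|y^*\|_0$ equals $\|x\|_0$ on the nose. In the case $\|x\|_0 > k$, the SIS-based guarantee of Algorithm \ref{alg:fast}---which holds against any $o(n^k)$-time bounded adversary via Lemma \ref{lem:vec}---ensures that except with probability $\mathrm{negl}(n)$ the routine returns $\mathrm{None}$, whereupon we output $\hat L$. In this branch we know $\|x\|_0 \in (k, n]$, and choosing $\hat L$ as the geometric mean of the two endpoints of this interval bounds both $\|x\|_0/\hat L$ and $\hat L/\|x\|_0$ by $\sqrt{n/k} = n^{\eps/2}\sqrt{\log n}$, which is at most $n^\eps$ for all $n$ sufficiently large in terms of the constant $\eps$ (since $\log n \le n^{\eps}$ eventually).

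All resource bounds are inherited verbatim from Algorithm \ref{alg:fast}: $\tO{n^{1-\eps}}$ bits of space, $\poly(n)$ update and query time, and robustness against $o(n^k) = o(n^{n^{1-\eps}})$-time adversaries in the random oracle model. I do not expect a substantive obstacle. The entire point of the improvement over the earlier $\tO{n^{1-\eps+c\eps}}$ bound is that the ``detects dense input'' feature of Algorithm \ref{alg:fast} already suffices for $\ell_0$ estimation: a $\mathrm{None}$ response is a certificate that $\|x\|_0 > k$, which by itself pins $\|x\|_0$ down to a factor of $n^\eps$ with a single fixed estimate, eliminating the subsampling tower that forces the extra $n^{c\eps}$ factor in prior constructions.
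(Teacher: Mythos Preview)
Your proposal is correct and follows essentially the same approach as the paper: invoke Fast-Recover with sparsity threshold $k\approx n^{1-\eps}$, output the exact $\ell_0$ when recovery succeeds, and return a fixed value when it reports \textit{None}. The only cosmetic differences are that the paper simply outputs $n^{1-\eps}$ (the lower endpoint) rather than your geometric mean $\lfloor\sqrt{kn}\rfloor$, and sets $k=n^{1-\eps}$ directly; note that with your choice $k=\Theta(n^{1-\eps}/\log n)$ the equality $o(n^k)=o(n^{n^{1-\eps}})$ does not literally hold, so you should either take $k=n^{1-\eps}$ or restate the adversary bound accordingly.
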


\begin{algorithm}
\caption{Estimate-L0($n$, $m$, $\eps$)}
\label{alg:est}

\KwData{$m$ integer updates $u_t$ to a length-n vector.}

$result$ $\gets$ Fast-Recover($n, m, n^{1-\eps}$)\;

\If{$result = None$}{
\Return $n^{1-\eps}$\;
}
\Return $\ell_0$($result$)\;

\end{algorithm}

\begin{proof}
    
    Algorithm \ref{alg:est} gives an $n^\eps$-approximation to the $\ell_0$ norm using Algorithm \ref{alg:k-rec} as a subroutine. Given a parameter $\eps > 0$, we set the parameter $k$ for $k$-sparse recovery to be $n^{1-\eps}$. The space used by Algorithm \ref{alg:est} is then  $\tilde{\mathcal{O}}(n^{1-\eps})$ bits. Also, both the recovery and the post-processing run in $\poly(n)$ time, so estimation can be done in $\poly(n)$ time. For correctness, if the vector is $n^{1-\eps}$-sparse, it can be recovered perfectly, and thus $\ell_0$($result$) is the exact value of the $\ell_0$ norm. Otherwise, for an input vector with more than $n^{1-\eps}$ non-zero entries, its $\ell_0$ norm lies in the range ($n^{1-\eps}$, $n$]. Hence, if we estimate its norm to be $n^{1-\eps}$, this gives an $n^\eps$-approximation. 
\end{proof}

\section{Matrix Recovery}

\subsection{Low-Rank Matrix Recovery}

In addition to recovering sparse vectors, we can recover low-rank matrices. We propose a white-box adversarially robust algorithm for the low-rank matrix recovery problem, which is efficient in terms of both time and space.

Similar to the $k$-sparse vector recovery problem, in order to achieve a fast update time while ensuring that the algorithm correctly detects inputs with rank larger than expected, we run two matrix recovery schemes in parallel. We will maintain one sketch based on a uniformly random matrix to distinguish if the input rank is too high to recover, and the other sketch will allow us to recover the input matrix if it is promised to be low rank. 

\begin{theorem}[\cite{doi:10.1137/070697835}]
\label{thm:iso}
    Let $\alpha = O(nk \log n)$ and let $A$ be a random matrix of dimension $\alpha \times n^2$, with entries sampled from an i.i.d. symmetric Bernoulli distribution:
    \[ A_{ij} = \begin{cases} 
      \sqrt{\frac{1}{\alpha}} & \text{with probability $\frac{1}{2}$} \\
      -\sqrt{\frac{1}{\alpha}} & \text{with probability $\frac{1}{2}$}
   \end{cases}\]
   Interpret $A$ as a linear map $\mathcal{A}: \mathbb{R}^{n \times n} \rightarrow \mathbb{R}^{\alpha}$ that computes $Ax$ for $x$ being the vectorization of an input $X \in \mathbb{R}^{n \times n}$. Then, given a rank-$r$ matrix $X_0 \in \mathbb{R}^{n \times n}$ and $b = \mathcal{A}(X_0)$ for $1 \leq r \leq min(k, n/2)$, with high probability $X_0$ is the unique low-rank solution to $\mathcal{A}(X) = b$ satisfying $rank(X) \leq r$. Moreover, $X_0$ can be recovered by solving a convex program: $argmin_X \lVert X\rVert_*$ subject to $\mathcal{A}(X)=b$.
\end{theorem}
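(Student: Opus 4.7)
My plan is to follow the now-standard two-part strategy for nuclear norm recovery: (i) show that the scaled Bernoulli map $\mathcal{A}$ satisfies a matrix Restricted Isometry Property (RIP) of order $2r$ with high probability, and (ii) show that any measurement operator with small enough rank-$2r$ RIP constant guarantees that the nuclear norm minimizer equals the true low-rank matrix $X_0$.

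\textbf{Step 1: Matrix RIP via concentration and a net argument.} The rows of $A$ are i.i.d. vectors of $\pm 1/\sqrt{\alpha}$ entries, so the row-by-row inner products $\langle A_i, \mathrm{vec}(X)\rangle$ are sub-Gaussian with variance $\|X\|_F^2/\alpha$. For a fixed $X$, standard sub-Gaussian concentration (a Hoeffding- or Hanson--Wright-type bound applied to $\|\mathcal{A}(X)\|_2^2 = \sum_{i=1}^{\alpha} \langle A_i,\mathrm{vec}(X)\rangle^2$) gives
\[
\Pr\!\left[\,\big|\|\mathcal{A}(X)\|_2^2 - \|X\|_F^2\big| > \delta \|X\|_F^2\,\right] \leq 2\exp(-c\delta^2 \alpha)
\]
for some absolute constant $c>0$. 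I then upgrade this to a uniform statement over all rank-$2r$ matrices of unit Frobenius norm. Parameterising such matrices through their SVD $U\Sigma V^\top$ with $U,V \in \mathbb{R}^{n\times 2r}$ orthonormal and $\Sigma$ a non-negative diagonal of unit Frobenius norm, a standard volumetric bound produces an $\eta$-net of size $\exp(O(nr\log(1/\eta)))$. Choosing $\delta=\eta$ to be a small constant and $\alpha = C\, nr\log n$ with $C$ large enough, a union bound over the net plus a Lipschitz approximation argument (writing any rank-$2r$ matrix as its nearest net point plus a small perturbation and iterating the tail bound) yields the uniform RIP: with high probability, for every $X$ of rank at most $2r$,
\[
(1-\delta)\|X\|_F^2 \;\leq\; \|\mathcal{A}(X)\|_2^2 \;\leq\; (1+\delta)\|X\|_F^2.
\]

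\textbf{Step 2: RIP implies exact recovery via nuclear norm minimisation.} Let $X^\star$ be any feasible minimiser of $\|\cdot\|_*$ under $\mathcal{A}(X)=b$, and write $H = X^\star - X_0$, so $\mathcal{A}(H)=0$. Using the SVD $X_0 = U\Sigma V^\top$ of rank $r$, decompose $H = H_T + H_{T^\perp}$, where $H_T$ is the projection of $H$ onto the tangent space $\{UM^\top + NV^\top\}$ (rank at most $2r$) and $H_{T^\perp} = (I-UU^\top)H(I-VV^\top)$ is orthogonal to the row and column spaces of $X_0$. By the pinching property of the nuclear norm, $\|X_0 + H\|_* \geq \|X_0\|_* + \|H_{T^\perp}\|_* - \|H_T\|_*$, so optimality forces $\|H_{T^\perp}\|_* \leq \|H_T\|_*$. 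Next I split $H_{T^\perp}$ into a sum $\sum_{j\geq 1} H_j$ of rank-$r$ blocks obtained from its SVD (peeling off the top $r$ singular values at a time), and use the telescoping bound $\|H_{j+1}\|_F \leq \|H_j\|_*/\sqrt{r}$ to show $\sum_{j\geq 2}\|H_j\|_F \leq \|H_{T^\perp}\|_*/\sqrt{r}$. Combining this with $\mathcal{A}(H_T + H_1) = -\mathcal{A}(\sum_{j\geq 2} H_j)$, the rank-$2r$ RIP gives
\[
(1-\delta)\|H_T + H_1\|_F^2 \;\leq\; \|\mathcal{A}(H_T+H_1)\|_2^2 \;\leq\; (1+\delta)\Big(\sum_{j\geq 2}\|H_j\|_F\Big)^{2},
\]
and chaining the earlier inequalities forces $H = 0$ whenever $\delta$ is a sufficiently small constant.

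\textbf{Main obstacle.} The concentration step for a fixed $X$ is routine, but turning it into a uniform RIP over the (non-convex) manifold of rank-$2r$ matrices is the technical heart: the net has to be constructed carefully so that the counting exponent $nr$ and the tail exponent $\delta^2 \alpha$ match, which is precisely where the $\alpha = O(nr\log n)$ scaling is forced. Once RIP is in hand, the nuclear norm recovery deduction is algebraic and robust, needing only the decomposition $H = H_T + H_{T^\perp}$ and the standard singular-value peeling trick, so the concentration/net argument is the place where all the work actually happens.
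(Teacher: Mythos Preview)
The paper does not prove this theorem; it is quoted as a black box from the cited reference (Recht, Fazel, and Parrilo), so there is no in-paper proof to compare against. Your outline is precisely the argument of that reference: sub-Gaussian concentration for a fixed matrix, an $\eta$-net over the rank-$2r$ manifold of size $\exp(O(nr\log(1/\eta)))$ to upgrade to uniform RIP at $\alpha = O(nr\log n)$ measurements, and then the tangent-space decomposition $H = H_T + H_{T^\perp}$ together with singular-value peeling to force $H=0$. One small imprecision worth fixing in a full write-up: $H_T$ already has rank up to $2r$, so $H_T + H_1$ can have rank up to $3r$, and the cross terms in the RIP step actually require isometry at order $4r$ or $5r$ (the original paper uses $\delta_{5r}$ small); this only changes constants and leaves the $\alpha = O(nk\log n)$ scaling intact.
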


We state our main theorem for matrix recovery:

\begin{theorem}
    Under Assumption \ref{asm:strong}, given an integer parameter $k$, there exists a streaming algorithm robust against $o(n^{nk})$ time-bounded white-box adversaries that either states that the input matrix has rank greater than $k$, or recovers the input matrix with rank at most $k$ using $\tilde{\mathcal{O}}(nk)$ bits of space and $\poly(n)$ time in the random oracle model.
\end{theorem}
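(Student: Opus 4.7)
The plan is to mirror the fast $k$-sparse recovery strategy of Algorithm \ref{alg:fast}: run an efficient recovery scheme alongside an SIS-based verifier, and only accept outputs that agree on both. Specifically, in parallel I would maintain (i) the linear sketch $b = \mathcal{A}(X) \in \mathbb{R}^{\alpha}$ of Theorem \ref{thm:iso}, with $\alpha = O(nk\log n)$, updated linearly under each stream update; and (ii) an SIS sketch $v = Ax \bmod q$, where $A \in \mathbb{Z}_q^{s \times n^2}$ is a uniformly random matrix generated on the fly from the random oracle, $s = \Theta(nk\log n)$, $q \in \poly(n)$ satisfies $q \geq n\cdot \beta$, and $x$ is the vectorization of the current matrix $X$. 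Each entry update of $X$ modifies $v$ by adding the appropriately scaled column of $A$ modulo $q$.

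At query time, I solve the nuclear-norm minimization $\arg\min_Y \|Y\|_*$ subject to $\mathcal{A}(Y)=b$ by convex programming in $\poly(n)$ time to obtain a candidate $Y^\ast$. The algorithm returns $Y^\ast$ if $\mathrm{rank}(Y^\ast)\leq k$, $\|Y^\ast\|_\infty \leq \beta$, and $Ay^\ast = v \bmod q$; otherwise, it declares that the input has rank greater than $k$.

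For correctness, if the true input $X$ satisfies $\mathrm{rank}(X)\leq k$, then Theorem \ref{thm:iso} guarantees that the convex program returns $Y^\ast = X$ with high probability over the sketch $\mathcal{A}$, which trivially passes all three verification checks. Conversely, suppose $X$ has rank greater than $k$ yet the verifier accepts some $Y^\ast$ with $\mathrm{rank}(Y^\ast)\leq k$, entries bounded by $\beta$, and $Ay^\ast = Ax \bmod q$. Then $(x - y^\ast) \bmod q$ is a short nonzero vector in the kernel of $A$. Since the adversary could simulate our post-processing and thereby solve SIS, Lemma \ref{lem:mat} applied with our choice $s = \Theta(nk\log n)$ ensures that no $o(n^{nk})$-time adversary can produce such an $X$ except with negligible probability, so the verifier correctly rejects.

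For resources, generating $A$ via the random oracle means we only store the two sketches. The isometry sketch holds $O(nk\log n)$ discretized coordinates of $O(\log n)$ bits each, while the SIS sketch occupies $s \cdot \log q = \tilde{\mathcal{O}}(nk)$ bits, giving $\tilde{\mathcal{O}}(nk)$ total. Each stream update touches one column of $A$ and one column of $\mathcal{A}$, costing $\poly(n)$ time; recovery is a convex program, and the three verification tests are $\poly(n)$. The main obstacle will be calibrating the SIS dimension $s$ so that Lemma \ref{lem:mat} applies at exactly rank threshold $k$: the lemma requires $k \in o(s/(n\log n))$, and the adversary's time budget is then bounded by the $\poly(n)^{nk}$ enumeration cost for rank-$k$ candidates, which is precisely the $o(n^{nk})$ bound in the statement. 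This is the same modulus-versus-sketch-rows tradeoff exploited in the improved rank-decision corollary, and verifying it explicitly is the delicate part of the argument.
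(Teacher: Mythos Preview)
Your proposal is correct and follows essentially the same approach as the paper: maintain in parallel the nuclear-norm-minimization sketch of Theorem~\ref{thm:iso} and an SIS sketch, recover via the convex program, and verify against the SIS sketch using Lemma~\ref{lem:mat}. The only cosmetic difference is that the paper writes the SIS row count as $f(k)\cdot n\log n$ with $f(k)\in\omega(k)\cap\tilde{\mathcal{O}}(k)$ rather than $\Theta(nk\log n)$, precisely to make the calibration $k\in o(s/(n\log n))$ you flag go through cleanly; you have already identified this as the delicate point.
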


\begin{algorithm}
\caption{Recover-Matrix($n$, $m$, $k$)}
\label{alg:m-rec}

\KwData{$m$ integer updates $u_t$ to an $n \times n$ matrix with entries bounded in $\beta \in \poly(n)$;
\newline
A modulus $q \in \poly(n)$ with $q \gg \beta$.}

 Let $f(k)$ be a function in $\omega(k)$ and $\Tilde{\mathcal{O}}(k)$. Initialize a uniformly random matrix $H \in \mathbb{Z}^{f(k)\cdot n\log n \times n^2}_{q}$ for $q \in \poly(n)$, a matrix $A:\alpha \times n^2$ as specified in Theorem \ref{thm:iso}, and zero vectors $v, w$ of length $f(k)\cdot n\log n$\;

\ForEach{update $u_t$ with $t\in [m]$}{
    Update $v$ by adding $u_t\cdot H_i$ to it, and update $w$ by  adding $u_t \cdot A_i$ to it, where $i$ corresponds to the vectorized index of the update, and where $H_i, A_i$ are the $i^{th}$ columns of $H, A$, respectively\;}

$X_0 \gets argmin_X \lVert X\rVert_*$ subject to $A\cdot vectorize(X)=w$\;

\If{$rank(X_0) \leq k$ \textbf{andalso} $X_0 \in \mathbb{Z}^{ n \times n}_{\beta}$ \textbf{andalso} $H\cdot vectorize(X_0)=v\mod q$}{
    \Return $X_0$\;
}

\Return $None$\;

\end{algorithm}

\begin{proof}
    Algorithm \ref{alg:m-rec} decides and recovers a matrix with rank no greater than $k$ using $\tilde{\mathcal{O}}(nk)$ bits of space.     
    For any input matrix $X \in \mathbb{Z}^{n \times n}_\beta$ with $\beta \in \poly(n)$, $q \geq n\cdot \beta$, and $rank(X) \leq k$, by the uniqueness of the low-rank solution given in Theorem \ref{thm:iso}, $X$ can be recovered by solving a convex program, and its product with the matrix $H$ matches the sketch $v$. On the other hand, when $rank(X) > k$, by Lemma \ref{lem:mat} under the SIS hardness assumption, there does not exist a low-rank matrix $Y$ distinct from $X$, for which $Hy = v = Hx\mod q$ with $x, y$ being the vectorization of $X, Y$, respectively. Therefore, in this case Algorithm \ref{alg:m-rec} outputs $None$, as desired.
    
    Both random matrices $H$ and $A$ used in Algorithm \ref{alg:m-rec} can be generated on the fly in the random oracle model. Therefore, the recovery algorithm only stores two sketch vectors of length $\tilde{\mathcal{O}}(nk)$ with entries bounded by $\poly(n)$, taking $\tilde{\mathcal{O}}(nk)$ bits in total. Solving the convex problem with the ellipsoid method and then comparing the solution with the sketch is $\poly(n)$ time, giving overall $\poly(n)$ time.
\end{proof}

\begin{rem} 
    We argue the optimality of our low-rank matrix recovery algorithm: with roughly $nk$ space, any white-box adversarially robust algorithm for low-rank matrix recovery has to assume that the adversary is $n^{nk}$-time bounded. Otherwise, the adversary has enough time to find a pair of inputs $X \neq X'$ that go to the same state and satisfy $rank(X' - X) > k$. Inserting $X$ then $-X$, or $X'$ then $-X$ into the stream, the algorithm cannot tell if the input is $0$ or $X'-X$. Hence, our $\tilde{\mathcal{O}}(nk)$-bit algorithm assuming an $o(2^{nk\log n}) = o(n^{nk})$ adversary is nearly optimal.
\end{rem}

\subsection{Applications of Low-Rank Matrix Recovery}

Our low-rank matrix recovery algorithm can be applied to a number of other problems on data streams. 

\subsubsection{Rank Decision Problem}

\begin{defn}[Rank Decision Problem]
Given an integer $k$, and an $n \times n$ matrix $A$, determine whether the rank of $A$ is larger than $k$.
\end{defn}

\begin{theorem}
    Under Assumption \ref{asm:strong}, given an integer parameter $k$, there exists a streaming algorithm robust against $o(n^{nk})$ time-bounded white-box adversaries that solves the rank decision problem using $\tilde{\mathcal{O}}(nk)$ bits of space and $\poly(n)$ time in the random oracle model.
\end{theorem}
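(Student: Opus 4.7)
The plan is to obtain this result as an immediate corollary of the low-rank matrix recovery algorithm (Algorithm \ref{alg:m-rec}). Recall that Recover-Matrix already has a dichotomous behavior: it either returns a matrix $X_0$ with $\rank(X_0) \leq k$ that agrees with the input stream, or it returns $None$. So I would simply run Recover-Matrix($n$, $m$, $k$) on the stream and output ``rank $\leq k$'' whenever its return value is not $None$, and ``rank $> k$'' otherwise.

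For correctness, I would invoke the two cases of the previous theorem. If the true matrix $X$ has $\rank(X) \leq k$, then Theorem \ref{thm:iso} guarantees that $X$ is the unique nuclear-norm minimizer of the convex program on the sketch $w$, so the algorithm recovers $X_0 = X$ and the rank check $\rank(X_0) \leq k$ passes; we output ``rank $\leq k$'' correctly. If instead $\rank(X) > k$, then Lemma \ref{lem:mat} under Assumption \ref{asm:strong} rules out, except with negligible probability, the existence of any integer matrix $Y$ of rank $\leq k$ with bounded entries satisfying $H \cdot \textrm{vec}(Y) = H \cdot \textrm{vec}(X) \bmod q$. Thus the post-processing check inside Recover-Matrix must fail, the procedure returns $None$, and we output ``rank $> k$'' correctly.

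The space bound of $\tilde{\mathcal{O}}(nk)$ bits, the $\poly(n)$ running time, and the $o(n^{nk})$-time robustness against white-box adversaries are all inherited verbatim from Algorithm \ref{alg:m-rec}; no additional storage or computation is introduced by the trivial wrapper. There is essentially no obstacle here: the only subtlety worth flagging in the write-up is that the rank decision output is determined purely by whether the recovery step succeeds, so we do not need to separately verify $\rank(X_0)$ beyond the check already performed inside Recover-Matrix, and the negligible failure probability from Lemma \ref{lem:mat} absorbs into the robustness guarantee.
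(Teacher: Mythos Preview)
Your proposal is correct and takes exactly the same approach as the paper: the paper's entire proof is the single sentence ``This problem is solved by running Algorithm~\ref{alg:m-rec} with parameter $k$,'' and you have simply expanded that into an explicit correctness argument. If anything, your write-up is more detailed than the paper's own.
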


\begin{proof}

    This problem is solved by running Algorithm \ref{alg:m-rec} with parameter $k$.
    This directly improves \cite{10.1145/3517804.3526228}.
\end{proof}

\subsubsection{Graph Matching}

\begin{defn}[Maximum Matching Problem]
Given an undirected graph $G = (V, E)$, the maximum matching problem is to find a maximum set of vertex disjoint edges in $G$. In a stream, we see insertions and deletions to edges. 
\end{defn}

\begin{theorem}
    Under Assumption \ref{asm:strong}, given an integer upper bound $k'$ on the size of a maximum matching in the graph, there is a streaming algorithm robust against $o(n^{2nk'})$ time-bounded white-box adversaries that finds a maximum matching in a graph using $\tilde{\mathcal{O}}(nk')$ bits of space and $\poly(n)$ time in the random oracle model.
\end{theorem}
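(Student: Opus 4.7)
The plan is to reduce maximum matching to the low-rank matrix recovery primitive of Algorithm \ref{alg:m-rec} via the Tutte matrix framework. Define the deterministic Tutte matrix $T\in\mathbb{Z}^{n\times n}$ of the input graph $G=(V,E)$ by $T_{ij}=1$ if $(i,j)\in E$ and $i<j$, $T_{ij}=-1$ if $(i,j)\in E$ and $i>j$, and $T_{ij}=0$ otherwise. By Tutte's theorem, the generic rank of the symbolic Tutte matrix over the field of rational functions in the indeterminates $x_{ij}$ equals exactly $2\nu(G)$, where $\nu(G)$ is the size of a maximum matching; since rank can only drop under specialization of the $x_{ij}$, the uniform choice $x_{ij}=1$ gives $\rank(T)\le 2\nu(G)$.

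The algorithm feeds $T$ into Algorithm \ref{alg:m-rec} with rank parameter $2k'$: each insertion of an edge $(i,j)$ with $i<j$ is processed as a $+1$ update at matrix coordinate $(i,j)$ and a $-1$ update at $(j,i)$, and each deletion flips these signs. At query time, the promise $\nu(G)\le k'$ yields $\rank(T)\le 2k'$, so Algorithm \ref{alg:m-rec} recovers $T$ exactly. The edge set $E$ is then read off as the support of the recovered matrix, and a maximum matching is produced by running any offline polynomial-time algorithm such as Edmonds' blossom algorithm. The space is $\tilde{\mathcal{O}}(n\cdot 2k')=\tilde{\mathcal{O}}(nk')$ bits and the runtime is $\poly(n)$; because the reduction introduces no additional randomness on top of Algorithm \ref{alg:m-rec}, the robustness against $o(n^{n\cdot 2k'})=o(n^{2nk'})$-time white-box adversaries is inherited directly from Algorithm \ref{alg:m-rec}.

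The step I expect to be the main obstacle, and the reason this theorem requires a new observation beyond the classical algebraic matching paradigm, is justifying that a fully deterministic substitution suffices. The standard Rabin--Vazirani style approach uses random $x_{ij}$ so that $\rank(T)=2\nu(G)$ holds with high probability, but in the white-box setting that randomness would be visible to the adversary, who could adaptively construct edge sets that engineer cancellations and drop the rank below what random substitutions would give, defeating any argument that relies on the rank exactly matching $2\nu(G)$. The key insight that makes the plan above work is that we do not need the equality $\rank(T)=2\nu(G)$ at all; the one-sided bound $\rank(T)\le 2k'$ is all that is required for $T$ to fit inside the low-rank recovery budget of Algorithm \ref{alg:m-rec}, and once $T$ itself has been recovered, the edge set is known exactly and the matching is computed offline without any further algebraic argument about ranks of specialized Tutte matrices.
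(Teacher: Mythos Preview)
Your proposal is correct and follows essentially the same approach as the paper: both fill in the Tutte matrix deterministically with $\pm 1$, invoke the one-sided bound $\rank(T)\le 2\nu(G)$ from specialization, run the low-rank recovery primitive with parameter $2k'$, and then compute a matching offline from the recovered edge set. Your final paragraph correctly isolates the same key observation the paper highlights---that only the inequality $\rank(T)\le 2k'$ is needed, not the equality, which is what makes the deterministic substitution viable in the white-box setting. The one minor point on which the paper says slightly more is that if the recovery primitive reports rank exceeding $2k'$, it declares the matching size exceeds $k'$; you instead treat $k'$ as a hard promise, which is consistent with the theorem as stated but omits the detection behavior mentioned in the paper's informal version of the result.
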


\begin{proof}
    We can use the fact that the rank of the $n \times n$ Tutte matrix $A$ of the graph $G$, where $A_{i,j} = 0$ if there is no edge from $i$ to $j$, and $A_{i,j} = x_{i,j}$ and $A_{j,i} = -x_{i,j}$ for an indeterminate $x_{i,j}$ otherwise, equals twice the maximum matching size of $G$. Here the rank of $A$ is defined to be the maximum rank of $A$ over the reals over all assignments to its indeterminates. The main issue, unlike standard algorithms (see, e.g., Sections 4.2.1 and 4.2.2 of \cite{cheung13}), is that we cannot fill in the entries of $A$ randomly in a stream in the white-box model because the adversary can see our state and try to fool us. 
    Fortunately, there is a fix - in the stream we replace all $x_{i,j}$ deterministically with the number $1$. Call this deterministically filled in matrix $A'$, and note that the rank of $A'$ is at most the rank of $A$, and the latter is twice the maximum matching size. We then run our low-rank matrix recovery algorithm with parameter $k$ set to $2k'$. If we detect that the rank of $A'$ is greater than $2k'$, then the rank of $A$ is greater than $2k'$, and the maximum matching size is larger than $k'$ and we stop the algorithm and declare this. Otherwise, we have successfully recovered $A'$ and now that the stream is over, the locations of the $1$s are exactly the indeterminates in $A$, and so we have recovered $A$ and hence $G$ and thus can run any offline algorithm for computing a maximum matching of $G$.
\end{proof}

\subsection{Extension to Robust PCA and Tensors}

A recurring idea in our algorithms is to run two algorithms in parallel: (1) an algorithm to detect if the input is drawn from a small family of inputs, such as those which are sparse or low rank or both, and (2) a time-efficient deterministic algorithm which recovers the input if it is indeed drawn from such a family. The algorithm in (1) relies on the hardness of SIS while the algorithm in (2) is any time and space efficient deterministic, and thus white box adversarially robust, algorithm. For (1) we use an SIS matrix, and for (2), for robust PCA we use the algorithm of \cite{Tanner2020CompressedSO} while for tensors we use the algorithm of \cite{tensor}.

\subsubsection{Robust Principal Component Analysis}

The problem of Robust Principal Component Analysis is defined as follows:

\begin{defn}[Robust Principal Component Analysis]
Consider a data matrix 
    $M \in \mathbb{Z}_q^{n \times n}$ for $q \geq \poly(n)$, such that there exists a decomposition $M = L + S$, where $L \in \mathbb{Z}_q^{n \times n}$ satisfies $rank(L) \leq k$ and $S \in \mathbb{Z}_q^{n \times n}$ has at most $r$ non-zero entries. 
    The robust principal component analysis (RPCA) problem seeks to find the components $L$ and $S$.
\end{defn}

As in Section \ref{sec:sis}, we derive the following lemma based on the hardness of the SIS problem. 

\begin{lem}
    \label{lem:rpca}
    Under Assumption \ref{asm:strong}, given a uniformly random matrix $A \in \mathbb{Z}^{n \times m}_q$ for $q, m, \beta \in \poly(n)$ and $q \geq n\cdot \beta$, if a matrix $X \in \mathbb{Z}^{\sqrt{m} \times \sqrt{m}}_\beta$ is generated by an $o(2^{n})$-time adversary, then with probability $\geq 1-negl(n)$, there do not exist matrices $L, S \in \mathbb{Z}^{\sqrt{m} \times \sqrt{m}}_\beta$ with $rank(L) \leq k$ and\footnote{We use $nnz(S)$ to denote the number of non-zero entries of a matrix $S$.} $nnz(S) \leq r$, for which $X \neq L+S\mod q$ and $Ax = A(l+s)\mod q$, for $x, l, s$ being the vectorization of $X, L$ and $S$, respectively, and $r, k$ satisfying $k \in o(\frac{n-r\log n}{\sqrt{m}\log n})$.
\end{lem}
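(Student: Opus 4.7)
My plan is to mimic the proofs of Lemma \ref{lem:vec} and Lemma \ref{lem:mat}: reduce the existence of a ``bad'' decomposition $(L,S)$ to an SIS instance that a brute-force adversary could solve, and then use Assumption \ref{asm:strong} to conclude the reduction fails.

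First, I would argue that any pair $(L,S)$ satisfying the conditions in the lemma immediately yields a valid SIS solution. Let $x, l, s$ denote the vectorizations of $X, L, S$, and set $z := (x - l - s) \bmod q$. Since the entries of $X, L, S$ lie in $[-\beta, \beta]$, the entries of $z$ are bounded by $3\beta \in \poly(n)$, so $\|z\|_2 \leq O(\sqrt{m}\,\beta) \in \poly(n)$. The hypothesis $X \neq L+S \bmod q$ gives $z \neq 0$, and the hypothesis $Ax = A(l+s) \bmod q$ gives $Az = 0 \bmod q$. Hence $z$ is a short, non-zero vector in the kernel of $A$, i.e., a valid SIS solution.

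Next, I would bound the size of the candidate set that an adversary, given $X$, must enumerate to locate such a pair. For the low-rank component, the same counting argument as in the proof of Lemma \ref{lem:mat} gives at most $\poly(n)^{\sqrt{m}\cdot k}$ candidates $L \in \mathbb{Z}_\beta^{\sqrt m \times \sqrt m}$ with $\rank(L) \leq k$: choose $k$ linearly independent columns in $\poly(n)^{\sqrt m \cdot k}$ ways, then express each of the remaining $\sqrt m - k$ columns as a linear combination of them in $\poly(n)^{(\sqrt m - k)k}$ ways. For the sparse component, the number of $S \in \mathbb{Z}_\beta^{\sqrt m \times \sqrt m}$ with $\nnz(S) \leq r$ is at most $\binom{m}{r}(2\beta+1)^r \leq \poly(n)^r$. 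Multiplying, the total number of candidate pairs is at most $\poly(n)^{\sqrt m \cdot k + r}$.

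Finally, I would check that under the assumed bound $k \in o\!\left(\tfrac{n - r\log n}{\sqrt m \log n}\right)$ we have $(\sqrt m \cdot k + r)\log n = o(n)$, so the enumeration takes $\poly(n)^{\sqrt m \cdot k + r} = o(2^n)$ time, which fits inside the budget of an $o(2^n)$-time adversary. Hence, if the event in the lemma held with non-negligible probability, an $o(2^n)$-time adversary could (upon producing $X$) iterate through all candidate pairs $(L,S)$, detect a violating pair, output the corresponding $z$, and solve SIS with non-negligible probability, contradicting Assumption \ref{asm:strong}. The main obstacle I anticipate is the counting of candidate pairs: the low-rank count carries over verbatim from Lemma \ref{lem:mat}, while the sparse count contributes an extra multiplicative $\poly(n)^r$ factor, which is exactly the slack that the strengthened hypothesis $k \in o\!\left(\tfrac{n - r\log n}{\sqrt m \log n}\right)$ is designed to absorb.
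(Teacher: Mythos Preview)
Your proposal is correct and follows essentially the same approach as the paper's own proof: count the candidate pairs $(L,S)$ as $\poly(n)^{\sqrt m \cdot k + r}$ by combining the low-rank count from Lemma~\ref{lem:mat} with the sparse count $\binom{m}{r}\poly(n)^r$, verify that the hypothesis on $k$ makes this enumeration fit in $o(2^n)$ time, and conclude that a violating pair would yield the SIS solution $(x-l-s)\bmod q$. Your write-up is, if anything, slightly more explicit than the paper's in checking that $z$ is short and non-zero.
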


\begin{proof} 
Similar to the proof of Lemma \ref{lem:mat}, an adversary is able to try all pairs of matrices $L, S \in \mathbb{Z}^{\sqrt{m} \times \sqrt{m}}_\beta$ with $rank(L) \leq k$ and $nnz(S) \leq r$ in $\poly(n)^{r+k\sqrt{m}}$ time. As shown in the proof of Lemma \ref{lem:mat}, there are $\poly(n)^{\sqrt{m}k}$-many candidates for $L$. For the sparse matrix $S$, there are ${m \choose r} \in \poly(n)^r$ ways of positioning the non-zero entries, with their values chosen in $\poly(n)$. Therefore in total there are $\poly(n)^{r+k\sqrt{m}}$ pairs of candidate matrices.

When $k \in o(\frac{n-r\log n}{\sqrt{m}\log n})$, there exists an $o(2^{n})$-time adversary that is able to iterate through all candidate pairs. Thus, under Assumption \ref{asm:strong}, with overwhelming probability such an $L, S$ do not exist, otherwise, given $L$ and $S$, an adversary can solve the SIS problem by outputting $(x$-$l$-$s)\mod q$.
\end{proof}

As in our matrix recovery algorithm, we run a compressed sensing scheme for RPCA in parallel to achieve a fast recovery time. This fast scheme approximates a unique pair of low rank and sparse matrices from their sum,  assuming the sum is decomposable into a pair of such matrices. 

\begin{theorem}[\cite{Tanner2020CompressedSO}]
\label{thm:iso-rs}
    Let $\alpha = O((nk+r)\cdot \log n)$, and 
    let $A$ be a random matrix of dimension $\alpha \times n^2$, with entries sampled from an i.i.d. symmetric Bernoulli distribution:
    \[ A_{ij} = \begin{cases} 
      \sqrt{\frac{1}{\alpha}} & \text{with probability $\frac{1}{2}$} \\
      -\sqrt{\frac{1}{\alpha}} & \text{with probability $\frac{1}{2}$}
   \end{cases}\]
   Interpret $A$ as a linear map $\mathcal{A}: \mathbb{R}^{n \times n} \rightarrow \mathbb{R}^{\alpha}$ which computes $Ax$ for $x$ being the vectorization of an input $X \in \mathbb{R}^{n \times n}$. Then given $b = \mathcal{A}(L_0 + S_0)$, with high probability, $L_0, S_0$ is the unique solution to $\mathcal{A}(L+S) = b$ satisfying $rank(L_0) \leq k$ and $nnz(S) \leq r$. Moreover, $L_0, S_0$ can be recovered efficiently to a precision of $\lVert (L+S) -(L_0+S_0) \rVert_F \leq 42\eps$ by solving a semidefinite program: 
   \[argmin_{L, S}(\lVert L \rVert_* + \sqrt{2r/s}\cdot\lVert S \rVert_1)\] subject to 
   \[\lVert \mathcal{A} (L+S) -b \rVert_2 \leq \eps\]
   with the nuclear norm $\lVert \cdot \rVert_*$ of a matrix $M$ defined as the sum of its singular values $\lVert M \rVert_* = \sum_i\sigma_i(M)$; and the $1$-norm $\lVert \cdot \rVert_1$ defined as its maximum absolute column sum $\lVert M \rVert_1 = max_{0 \leq j \leq n} \sum_{i = 1}^n \lvert M_{ij}\rvert$.
\end{theorem}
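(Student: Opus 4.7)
The plan is to follow the standard compressed sensing template adapted to the low rank plus sparse setting, combining a measurement concentration argument with a convex duality argument for the SDP. The overall structure has three stages: (i) show that $\mathcal{A}$ concentrates appropriately on the union of ``low rank plus sparse'' matrices, (ii) use this concentration to conclude that $(L_0, S_0)$ is the unique feasible decomposition with $\rank(L_0)\le k$ and $\nnz(S_0)\le r$, and (iii) construct a dual certificate showing that the SDP minimizer coincides with $(L_0, S_0)$ (up to the claimed $\eps$-slack).

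For stage (i), I would prove a restricted isometry property (RIP) for $\mathcal{A}$ on the set
\[
    \mathcal{T}_{k,r} = \{L + S : \rank(L) \leq 2k,\ \nnz(S) \leq 2r\}.
\]
Since the entries of $A$ are independent, symmetric, $\pm \sqrt{1/\alpha}$ random variables, $\mathcal{A}$ is a sub-Gaussian ensemble and standard matrix Bernstein / Hanson--Wright concentration gives, for any \emph{fixed} $M$, that $\lVert \mathcal{A}(M)\rVert_2^2$ concentrates around $\lVert M\rVert_F^2$. To upgrade this to a uniform bound on $\mathcal{T}_{k,r}$, I would build an $\epsilon$-net: the set of rank-$\le 2k$ matrices admits a net of size $\exp(O(nk\log(1/\epsilon)))$, and the set of $2r$-sparse matrices admits a net of size $\binom{n^2}{2r}\cdot (1/\epsilon)^{O(r)} = \exp(O(r\log n))$. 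A Minkowski sum plus union bound then shows that with high probability
\[
    (1-\delta)\lVert M\rVert_F^2 \leq \lVert \mathcal{A}(M)\rVert_2^2 \leq (1+\delta)\lVert M\rVert_F^2
\]
simultaneously for all $M \in \mathcal{T}_{k,r}$, provided $\alpha = \Omega((nk+r)\log n)$. Stage (ii) then follows immediately: if $L_0 + S_0$ and $L' + S'$ were two valid decompositions mapping to the same measurement vector, then $(L_0 - L') + (S_0 - S') \in \mathcal{T}_{k,r}$ lies in $\ker(\mathcal{A})$, forcing it to be zero by the RIP.

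For stage (iii), I would argue that the combined nuclear-plus-$\ell_1$ program recovers $(L_0, S_0)$ approximately. The natural route is to exhibit a dual certificate: a vector $y \in \mathbb{R}^\alpha$ such that $\mathcal{A}^*(y) = U V^\top + W_L = \sqrt{2r/s}\cdot(\sgn(S_0) + W_S)$, where $UV^\top$ is the sign part of the SVD of $L_0$, $\sgn(S_0)$ is supported on the nonzeros of $S_0$, and $W_L, W_S$ are small off-support/off-range perturbations. With $\alpha = \Omega((nk+r)\log n)$ sub-Gaussian measurements, such certificates can be constructed via the standard golfing scheme applied separately to the low-rank and sparse tangent spaces. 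RIP then gives the error bound $\lVert (L+S)-(L_0+S_0)\rVert_F \leq C\eps$ whenever the measurement residual is at most $\eps$, by substituting the optimality of the SDP into the RIP lower bound.

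The main obstacle is stage (iii): for generic random $\mathcal{A}$, the low-rank and sparse components are identifiable (stage (ii)), but showing that the \emph{convex} surrogate is tight is more delicate, since the subdifferentials of $\lVert\cdot\rVert_*$ and $\lVert\cdot\rVert_1$ must be reconciled at $(L_0,S_0)$. The golfing construction needs to simultaneously control two tangent cones, and calibrating the tradeoff parameter $\sqrt{2r/s}$ so that both subgradient conditions hold under the same measurement budget is the technically heaviest step. Once the certificate is in hand, the error bound follows from standard inequalities combining the RIP with the SDP's feasibility constraint $\lVert \mathcal{A}(L+S)-b\rVert_2 \le \eps$.
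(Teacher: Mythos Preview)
The paper does not prove this theorem; it is quoted as a black-box result from \cite{Tanner2020CompressedSO} and used only as a subroutine inside Algorithm~\ref{alg:rpca}. There is therefore no ``paper's own proof'' to compare against.

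That said, your sketch is broadly the right shape for how such results are established. One remark on stage~(iii): once you have established RIP on the low-rank-plus-sparse set $\mathcal{T}_{k,r}$ (your stage~(i)), the standard and more direct route to stable convex recovery is not a golfing-scheme dual certificate but a cone/null-space argument --- one shows that any perturbation $(H_L,H_S)$ which does not increase the objective must lie in a descent cone on which RIP forces a lower bound on $\lVert \mathcal{A}(H_L+H_S)\rVert_2$, and then the feasibility constraint $\lVert \mathcal{A}(L+S)-b\rVert_2\le\eps$ yields the Frobenius error bound. Golfing is typically invoked when the measurement operator has special structure (partial Fourier, incoherent entry sampling) and a uniform RIP is unavailable; with an i.i.d.\ sub-Gaussian ensemble you already have RIP, so the recovery analysis becomes deterministic and considerably cleaner. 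This is the route taken in the cited reference, and adopting it would remove the heaviest step you flag at the end of your proposal.
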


Note that for an integer stream, we can set the error parameter $\eps \leq \frac{1}{\poly(n)}$ and then round the entries of the result to integers to guarantee exact recovery.

With that, we state our theorem for robust PCA:

\begin{theorem}
\label{thm:rpca-1}
    Under Assumption \ref{asm:strong}, given parameters $r, k > 0$, there exists a streaming algorithm robust against $o(n^{nk+r})$ time-bounded white-box adversaries that determines if an $n \times n$ input matrix can be decomposed into the sum of a matrix with rank at most $k$ and a matrix with at most $r$ non-zero entries, and if so, finds the decomposition using $\tilde{\mathcal{O}}(nk+r)$ bits of space and $\poly(n)$ time in the random oracle model.
\end{theorem}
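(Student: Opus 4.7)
The plan is to parallel the structure used for plain low-rank matrix recovery (Algorithm \ref{alg:m-rec}), but instantiated with the sparse-plus-low-rank compressed sensing scheme of Theorem \ref{thm:iso-rs} in place of the nuclear-norm-only scheme of Theorem \ref{thm:iso}, and with the SIS-based detector governed by Lemma \ref{lem:rpca} in place of Lemma \ref{lem:mat}. Concretely, I would maintain two linear sketches of the vectorization of the evolving matrix $M \in \mathbb{Z}_q^{n \times n}$: (i) a sketch $v = H\cdot\mathrm{vec}(M) \bmod q$, where $H \in \mathbb{Z}_q^{g(n,k,r) \times n^2}$ is a uniformly random SIS matrix with $g(n,k,r) = \omega(nk+r)\cdot \log n$ rows, generated on the fly via the random oracle; and (ii) a sketch $w = \mathcal{A}(\mathrm{vec}(M))$ using the Bernoulli sensing map $\mathcal{A}$ from Theorem \ref{thm:iso-rs} with $\alpha = O((nk+r)\log n)$ rows. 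Both sketches are $\tilde{\mathcal{O}}(nk+r)$ bits and are updated entry-wise in the stream.

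At the end of the stream I would run the post-processing from Theorem \ref{thm:iso-rs}: solve the semidefinite program minimizing $\lVert L\rVert_* + \sqrt{2r/s}\,\lVert S\rVert_1$ subject to $\lVert \mathcal{A}(\mathrm{vec}(L+S)) - w\rVert_2 \leq \eps$ with $\eps \leq 1/\poly(n)$, and round the entries of $(L_0, S_0)$ to integers. I then verify the candidate by checking (a) $\mathrm{rank}(L_0)\leq k$, (b) $\mathrm{nnz}(S_0)\leq r$, (c) the entries of $L_0,S_0$ lie in $[-\beta,\beta]$, and (d) $H\cdot\mathrm{vec}(L_0+S_0) = v \bmod q$. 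If all four pass, return $(L_0,S_0)$; otherwise return $None$.

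For correctness in the honest direction, if the stream's final matrix $M$ admits a decomposition $L^*+S^*$ with $\mathrm{rank}(L^*)\leq k$ and $\mathrm{nnz}(S^*)\leq r$, Theorem \ref{thm:iso-rs} guarantees that the SDP returns $(L^*,S^*)$ uniquely (after rounding), and then all four checks pass by construction. For soundness, the SIS check is what does the real work: suppose a $\mathcal{T}$-bounded adversary produces a stream whose final matrix $X$ is not a valid $(k,r)$ decomposition, yet causes the algorithm to output some $(L_0,S_0)$ with the claimed structure and matching sketch $H\cdot\mathrm{vec}(L_0+S_0) = H\cdot\mathrm{vec}(X)\bmod q$. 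Then $(L_0+S_0, X)$ is exactly the pair forbidden by Lemma \ref{lem:rpca} (applied with $\sqrt{m}=n$), and so such an adversary would solve SIS by outputting $\mathrm{vec}(X)-\mathrm{vec}(L_0+S_0) \bmod q$, contradicting Assumption \ref{asm:strong}.

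The main obstacle I expect is balancing the parameters so that Lemma \ref{lem:rpca}'s precondition $k \in o\!\left(\frac{n - r\log n}{\sqrt{m}\log n}\right)$ is met at the intended sketch dimension. With $\sqrt{m}=n$, the lemma as stated constrains $k$ relative to an SIS hardness parameter; to absorb both $nk$ and $r$ simultaneously into the SIS dimension, I would instantiate SIS at hardness parameter $\tilde{\Theta}(nk+r)$ rather than $n$ itself, i.e.\ use $H \in \mathbb{Z}_q^{\omega(nk+r)\log n \times n^2}$, and re-run the counting argument in the proof of Lemma \ref{lem:rpca}: the number of $(L,S)$ candidates is $\poly(n)^{nk+r}$, so an adversary running in time $o(n^{nk+r})$ cannot enumerate them, and the same SIS reduction goes through. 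The stated $o(n^{nk+r})$ adversary bound and the $\tilde{\mathcal{O}}(nk+r)$ space bound then follow, and the running time is $\poly(n)$ dominated by the SDP and the rank/sparsity/bound checks.
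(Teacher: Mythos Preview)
Your proposal is correct and follows essentially the same approach as the paper: Algorithm~\ref{alg:rpca} maintains exactly the two sketches you describe (an SIS sketch with $\tilde{\Theta}(nk+r)$ rows and the Bernoulli sensing sketch of Theorem~\ref{thm:iso-rs}), runs the SDP with $\eps \leq 1/\poly(n)$ and rounds, then performs the four checks you list, invoking Theorem~\ref{thm:iso-rs} for completeness and Lemma~\ref{lem:rpca} (at SIS dimension $\tilde{\Theta}(nk+r)$, just as you anticipate in your final paragraph) for soundness. The parameter-balancing ``obstacle'' you flag is resolved in the paper exactly as you suggest, so there is no remaining gap.
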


\begin{algorithm}
\caption{RPCA($n$, $m$, $k$, $r$)}
\label{alg:rpca}

\KwData{$m$ integer updates $u_t$ to an $n \times n$ matrix with entries bounded in $\beta \in \poly(n)$;
\newline
A modulus $q \in \poly(n)$ with $q \gg \beta$.}

 Let $f(k)$ be a function in $\omega(k)$ and $\Tilde{\mathcal{O}}(k)$. Initialize a uniformly random matrix $H \in \mathbb{Z}^{(f(k)\cdot n + r)\log n \times n^2}_{q}$ for $q \in \poly(n)$, a fast recovery matrix $A:(nk+r)\log n \times n^2$ as specified in Theorem \ref{thm:iso-rs}, and zero vectors $v, w$ of length $(f(k)\cdot n + r)\log n$\;

\ForEach{update $u_t$ with $t\in [m]$}{
    Update $v$ by adding $u_t\cdot H_i$ to it, and update $w$ by  adding $u_t \cdot A_i$ to it, where $i$ corresponds to the vectorized index of the update, and where $H_i, A_i$ are the $i^{th}$ columns of $H, A$, respectively\;}

$L_0, S_0 \gets argmin_{L, S}(\lVert L \rVert_* + \sqrt{2r/s}\cdot\lVert S \rVert_1)$ subject to $\lVert \mathcal{A} (L+S) -b \rVert_2 \leq \frac{1}{\poly(n)}$\;

\If{$rank(L_0) \leq k$ \textbf{andalso} $nnz(S_0) \leq r$ \textbf{andalso} $L_0, S_0 \in \mathbb{Z}^{ n \times n}_{\beta}$ \textbf{andalso} $H\cdot vectorize(L_0 + S_0)=v\mod q$}{
    \Return $L_0, S_0$\;
}

\Return $None$\;

\end{algorithm}

\begin{proof}
    Algorithm \ref{alg:rpca} determines if an $n \times n$ input matrix can be decomposed into a low rank matrix plus a sparse matrix.
    For any input matrix $X_0 = L_0+S_0 \in \mathbb{Z}^{n \times n}_\beta$ with $rank(L_0) \leq k$ and $nnz(S_0) \leq r$, by the uniqueness of the solution pair given in Theorem \ref{thm:iso-rs}, $L_0, S_0$ can be recovered by solving a semidefinite program, and the product of their sum with the matrix $H$ matches the sketch $v$. On the other hand, when the input $X$ cannot be decomposed into low rank and sparse components, by Lemma \ref{lem:rpca} under the SIS hardness assumption, there does not exist a pair of low rank and sparse matrices $L', S'$ such that $X \neq L' + S'$ and $H(l'+s') = v = Hx\mod q$, for $l', s', x$ being the vectorization of $L', S', X$, respectively. Therefore, in this case Algorithm \ref{alg:rpca} outputs $None$, as desired.
    
    Both random matrices $H$ and $A$ used in Algorithm \ref{alg:rpca} can be generated on the fly in the random oracle model. Therefore, the recovery algorithm only stores two sketch vectors of length $\tilde{\mathcal{O}}(nk+r)$ with entries bounded by $\poly(n)$, taking $\tilde{\mathcal{O}}(nk+r)$ bits in total. Also, solving the semidefinite program and then comparing the solution with the sketch takes $\poly(n)$ time, giving overall $\poly(n)$ time.
\end{proof}

\subsubsection{Tensor Recovery}
\label{sup:tensor}

Similar to our vector and matrix recovery algorithms, we propose an algorithm that recovers tensors with low CANDECOMP/PARAFAC (CP) rank.

\vspace{0.2em}
{\bf Notation:}
    Let $\otimes$ denote the outer product of two vectors. Then one can build a rank-$1$ tensor in $\mathbb{Z}^{n_1 \times n_2 \times \cdots \times n_d}$ by taking the outer product $x_1 \otimes x_2 \otimes \cdots \otimes x_d$ where $x_i \in \mathbb{Z}^{n_i}$. 

\begin{defn}[CP-rank]
    For a tensor $X \in \mathbb{Z}^{n_1 \times \cdots \times n_d}_q$, consider it to be the sum of $r$ rank-1 tensors:
    $X = \sum_{i=1}^r(x_{i1} \otimes x_{i2} \otimes \cdots  
    \otimes x_{id})$ where $x_{ij} \in \mathbb{Z}^{n_j}_q$.
    The smallest number of rank-$1$
    tensors that can be used to express a tensor $X$ is then defined to be the rank of the tensor. 
\end{defn}

As in Section \ref{sec:sis}, we derive the following lemma based on the hardness of the SIS problem. 

\begin{lem}
    \label{lem:tensor}
    Under Assumption \ref{asm:strong}, given a uniformly random matrix $A \in \mathbb{Z}^{n \times m}_q$ for $q, m, \beta \in \poly(n)$ and $q \geq n\cdot \beta$, if a tensor $X \in \mathbb{Z}^{n_1 \times \cdots \times n_d}_\beta$ is generated by an $o(2^{n})$-time adversary, where $\prod n_i = m$, then with probability $\geq 1-negl(n)$, there does not exist a tensor $Y \in \mathbb{Z}^{n_1 \times \cdots \times n_d}_\beta$ with $rank(Y) \leq k$, such that $X \neq Y\mod q$ and $Ax = Ay\mod q$, for $x, y$ being the vectorization of $X$ and $Y$, respectively, $\prod n_i = m$, and $k \in o(\frac{n}{(n_1 + \cdots + n_d)\log n})$.
\end{lem}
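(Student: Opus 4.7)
The plan is to mirror the proof template of Lemma \ref{lem:mat}, adapted to the CP decomposition setting. The key point is that a CP-rank-$\leq k$ tensor in $\mathbb{Z}_q^{n_1 \times \cdots \times n_d}$ is, by the paper's definition, specified by $k$ rank-$1$ outer products, each of which is determined by $d$ factor vectors $x_{ij} \in \mathbb{Z}_q^{n_j}$, $j = 1, \ldots, d$. So enumeration over such factor tuples captures every relevant candidate $Y$.

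First I would count the candidates the adversary must try. Each rank-$1$ component is parametrized by $n_1 + \cdots + n_d$ entries in $\mathbb{Z}_q$, giving $q^{n_1 + \cdots + n_d}$ possibilities; taking $k$ such components yields at most $q^{k(n_1 + \cdots + n_d)} \leq \poly(n)^{k(n_1 + \cdots + n_d)}$ candidate tensors, since $q \in \poly(n)$. For each candidate, the adversary forms the sum tensor, lifts it to $\mathbb{Z}^{n_1 \times \cdots \times n_d}_\beta$ (rejecting any whose entries fall outside $[-\beta,\beta]$), and checks in $\poly(n)$ time whether $Ay \equiv Ax \pmod q$ while $y \not\equiv x \pmod q$.

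Next I would argue that when $k \in o\!\left(\frac{n}{(n_1 + \cdots + n_d)\log n}\right)$, we have $k(n_1+\cdots+n_d)\log q = o(n)$, so the enumeration completes in $2^{o(n)}$ time, well within the $o(2^n)$ budget of the adversary. Hence, if such a distinct $Y$ existed with non-negligible probability, the adversary would find it and output $z = (x - y) \bmod q$, which is a nonzero short vector (its $\ell_2$ norm is at most $2\beta\sqrt{m}\in\poly(n)$, compatible with the SIS bound since $q \geq n \beta$) lying in the kernel of $A$ modulo $q$. This contradicts Assumption \ref{asm:strong}, so with probability at least $1 - \textrm{negl}(n)$ no such $Y$ exists.

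The main step requiring care is simply the candidate count: the exponent $k(n_1+\cdots+n_d)$ replaces the matrix case's $\sqrt{m}\,k$, and reflects that a rank-$1$ tensor has $n_1+\cdots+n_d$ degrees of freedom rather than $2\sqrt{m}$. Everything else—the reduction to SIS, the shortness of $z$, and the time budget calculation—is structurally identical to Lemma \ref{lem:mat}, so I do not expect any essential new obstacle.
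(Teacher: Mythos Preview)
Your proposal is correct and follows essentially the same argument as the paper: count the $\poly(n)^{k(n_1+\cdots+n_d)}$ candidate rank-$\le k$ tensors via their CP factor vectors, observe that the enumeration fits in $o(2^n)$ time under the stated bound on $k$, and reduce to SIS via $z=(x-y)\bmod q$. Your write-up is in fact slightly more careful than the paper's (you explicitly verify the shortness of $z$ and the time-budget arithmetic), but the structure and key ideas are identical.
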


\begin{proof} 
Similar to the proof of Lemma \ref{lem:mat}, an adversary is able to try all low rank tensors $Y \in \mathbb{Z}^{n_1 \times \cdots \times n_d}_\beta$ with $rank(Y) \leq k$ in $\poly(n)^{k(n_1 + \cdots + n_d)}$ time. For each $x_{ij}$, there are $\poly(n)^{n_j}$ choices of its value. So we have $\poly(n)^{n_1+ \cdots +n_d}$ many possible rank-$1$ tensors. Choosing $k$-many of them to generate a rank-$k$ tensor, that is $\poly(n)^{k(n_1 + \cdots + n_d)}$ candidates in total.

When $k \in o(\frac{n}{(n_1 + \cdots + n_d)\log n})$, there exists an $o(2^{n})$-time adversary that is able to iterate through all candidate pairs. Thus, under Assumption \ref{asm:strong}, with overwhelming probability such a $Y$ does not exist; otherwise, given $L$ and $S$, an adversary can solve the SIS problem by outputting $(x$-$y)\mod q$.
\end{proof}

As we did for vector and matrix recovery problems, we can run a fast low rank tensor estimation scheme in parallel in our tensor recovery algorithm. 

\begin{theorem}[\cite{tensor}]
\label{thm:iso-tensor}
    Let $\mathcal{A}:\mathbb{R}^{n_1 \times \cdots \times n_d} \rightarrow \mathbb{R}^{k(n_1 + \cdots n_d)logn}$ be a Gaussian measurement operator whose entries are properly normalized, i.i.d. Gaussian random variables. Then given a measurement $b = \mathcal{A}(X)$ for a tensor $X \in \mathbb{R}^{n1 \times \cdots \times n_d}$, there exists an algorithm that gives an estimate $X_0 \in \mathbb{R}^{n_1 \times \cdots \times n_d}$ with $\lVert X_0 - X \rVert_F \leq \frac{1}{\poly(n)}$ with high probability using $\poly(n)$ time, for $n = \prod^d_1n_i$.
\end{theorem}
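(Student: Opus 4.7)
\medskip
\noindent\textbf{Proof plan.}
The plan is to mirror the matrix-case argument of \thmref{iso}, replacing the variety of low-rank matrices with the variety of low-CP-rank tensors. The proof decomposes into (i) establishing a restricted isometry property (RIP) for $\mathcal{A}$ on rank-$k$ tensors, and (ii) exhibiting a tractable algorithm whose correctness is implied by this RIP.

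For step (i), I would first bound the $\eps$-covering number of the set of unit-Frobenius-norm tensors of CP-rank at most $k$. Such a tensor is specified by $k$ tuples $(x_{i1}, \ldots, x_{id})$ with $x_{ij} \in \R^{n_j}$, so the intrinsic dimension is $k(n_1 + \cdots + n_d)$. Standard volume arguments then yield nets of size $(C/\eps)^{k(n_1 + \cdots + n_d)}$, with multilinearity of the outer product controlling the Lipschitz blow-up when approximating a tensor by its nearest net point. Combining this with sub-Gaussian concentration of $\lVert \mathcal{A}(X)\rVert_2^2$ around $\lVert X\rVert_F^2$ and a union bound over the net would give, for the stated $m = k(n_1 + \cdots + n_d)\log n$ measurements, a $(1\pm\delta)$ two-sided isometry on all tensors of rank at most $2k$ with probability $1-\exp(-\Omega(m))$.

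For step (ii), uniqueness of any rank-$k$ solution to $\mathcal{A}(X)=b$ follows immediately from RIP on rank-$2k$ tensors: if $X, X'$ are both rank-$k$ with $\mathcal{A}(X)=\mathcal{A}(X')$, then $X-X'$ has rank at most $2k$ and lies in the kernel of $\mathcal{A}$, forcing $X=X'$. To actually recover $X$ efficiently I would run a tensor iterative hard thresholding (IHT) scheme: starting at $X^{(0)}=0$, iterate $X^{(t+1)} = P_k(X^{(t)} + \mathcal{A}^*(b - \mathcal{A}(X^{(t)})))$, where $P_k$ denotes projection onto tensors of rank at most $k$. Under RIP the iterates contract geometrically towards $X$, so $O(\log \poly(n))$ iterations suffice to drive the Frobenius error below $1/\poly(n)$.

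The main obstacle is the projection $P_k$: unlike the matrix case, where truncated SVD yields an exact best rank-$k$ approximation in polynomial time, exact best rank-$k$ tensor approximation is NP-hard in general. The technical heart of the proof, which is what \cite{tensor} supplies, is to show that an \emph{approximate} projection --- for instance higher-order SVD or suitably warm-started alternating least squares --- is good enough: provided the approximation ratio is within a tolerance depending on the RIP constant, the IHT iteration still contracts geometrically. Granting such a $\poly(n)$-time approximate projection subroutine, both the polynomial runtime and the $1/\poly(n)$ final error bound follow.
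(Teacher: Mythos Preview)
This theorem is not proved in the present paper; it is imported as a black box from the external reference \cite{tensor}, and no argument for it appears anywhere in the text. There is therefore no ``paper's own proof'' against which to compare your proposal.

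As a standalone sketch your outline has the right overall shape for tensor compressed-sensing results --- a covering-number argument to get RIP on low-rank tensors, followed by an iterative recovery scheme --- but one genuine caution deserves flagging. The statement here is about CP-rank, and for CP-rank the projection $P_k$ onto rank-$\le k$ tensors is not merely NP-hard to compute: the best rank-$k$ CP approximation can fail to exist altogether (the set of rank-$\le k$ tensors is not closed for $d\ge 3$). The IHT-with-approximate-projection arguments in the literature that you allude to are typically carried out for Tucker or tensor-train formats, where HOSVD-style truncation gives a provable quasi-optimal projection; the analogous guarantee is not available for CP. So if \cite{tensor} really delivers the CP version with $\poly(n)$ time, the mechanism is likely more delicate than ``HOSVD or warm-started ALS is good enough,'' and that is precisely the step your plan leaves unresolved.
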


\begin{rem}
\label{rem:tensor}
    For our purposes, we round the Gaussian random variables to additive integer multiples of $\frac{1}{poly(n)}$. This rounding changes the norm of the measurement by at most additive $\frac{1}{poly(n)}$, and therefore asymptotically does not change the result in Theorem \ref{thm:iso-tensor}. The discretized random variables can then be constructed to the desired precision using uniformly random bits \cite{10.1145/2710016} generated by a random oracle.

    Then running the algorithm from Theorem \ref{thm:iso-tensor} in a stream, we only have to maintain and update a measurement vector of length $\tilde{\mathcal{O}}({k(n_1 + \cdots n_d)})$ with entries bounded in $\poly(n)$, giving an overall $\tilde{\mathcal{O}}(k(n_1 + \cdots +n_d))$ bits of space usage.
\end{rem}

For an integer stream, we can round the entries of the estimation result to integers to guarantee exact recovery. We formulate the low rank tensor recovery algorithm as follows.

\begin{theorem}
\label{thm:tensor}
    For an $n_1 \times \cdots \times n_d$ input tensor $X$, under Assumption \ref{asm:strong}, given a parameter $k$ with $k \in \Theta(\frac{n^c}{(n_1+\cdots+n_d)\log n})$ for $n = \prod^d_1n_i$ and a constant $c > 0$, there exists a streaming algorithm robust against $o(n^{k(n_1 + \cdots + n_d)})$ time-bounded white-box adversaries that determines if the input tensor has CP rank at most $k$ and if so, recovers the tensor using $\tilde{\mathcal{O}}(k(n_1 + \cdots n_d))$ bits of space and $\poly(n)$ time in the random oracle model.
\end{theorem}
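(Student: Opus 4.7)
The plan is to mirror the two-sketch architecture used for low-rank matrix recovery (Theorem~\ref{thm:m-rec-1}) and robust PCA (Theorem~\ref{thm:rpca-1}). I would run in parallel (i) an SIS-based verification sketch that certifies when the input tensor fails to have CP-rank at most $k$ in the sense of Lemma~\ref{lem:tensor}, and (ii) the time-efficient low-rank tensor recovery scheme from Theorem~\ref{thm:iso-tensor}, which produces a candidate reconstruction whenever the input actually is low CP-rank.

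Concretely, writing $N=\prod_{i=1}^d n_i$ and viewing each update $u_t$ as changing one coordinate of the vectorized tensor, I would use the random oracle to generate a uniformly random matrix $H \in \mathbb{Z}_q^{f(k)(n_1+\cdots+n_d)\log n \times N}$ with $q \in \poly(n)$ and $q \geq n\cdot \beta$, where $f(k) \in \omega(k)\cap \tilde{\mathcal{O}}(k)$, and maintain $v \gets v + u_t\cdot H_i$. In parallel I would maintain the Gaussian measurement sketch $w \gets w + u_t\cdot \mathcal{A}_i$ from Theorem~\ref{thm:iso-tensor}, after discretizing the Gaussian entries to integer multiples of $1/\poly(n)$ as in Remark~\ref{rem:tensor}. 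On query, I invoke the reconstruction algorithm from Theorem~\ref{thm:iso-tensor} on $w$ to obtain $X_0 \in \mathbb{R}^{n_1\times\cdots\times n_d}$, round every entry to the nearest integer, and then check that (a) the CP-rank of $X_0$ is at most $k$, (b) $X_0 \in \mathbb{Z}_\beta^{n_1\times\cdots\times n_d}$, and (c) $H\cdot \mathrm{vec}(X_0) \equiv v \pmod q$. If all three pass I output $X_0$; otherwise I output \textit{None}.

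For correctness, if $X$ has CP-rank at most $k$ then Theorem~\ref{thm:iso-tensor} returns a real tensor within Frobenius distance $1/\poly(n)$ of $X$, so integer rounding yields $X$ exactly and all three checks pass. Conversely, if $X$ has CP-rank greater than $k$, Lemma~\ref{lem:tensor} applied in the stated regime $k\in o(N/((n_1+\cdots+n_d)\log n))$ guarantees that, with probability $1-\mathrm{negl}(n)$ over the random oracle, no CP-rank-$k$ tensor $Y\in\mathbb{Z}_\beta^{n_1\times\cdots\times n_d}$ distinct from $X$ can satisfy $Hy \equiv Hx \pmod q$ against any $o(2^n)$-time adversary. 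Since the theorem only concerns $o(n^{k(n_1+\cdots+n_d)})$-time adversaries, which a fortiori cannot enumerate the $\poly(n)^{k(n_1+\cdots+n_d)}$ candidate low-rank tensors described in the proof of Lemma~\ref{lem:tensor}, check (c) must fail on every low-rank $X_0 \neq X$ and the algorithm correctly outputs \textit{None}. The space bound is immediate: each of $v$ and $w$ has length $\tilde{\mathcal{O}}(k(n_1+\cdots+n_d))$ with $\poly(n)$-bounded entries, and both $H$ and $\mathcal{A}$ are regenerated on the fly through the random oracle. The running time is $\poly(n)$ since the reconstruction of Theorem~\ref{thm:iso-tensor}, the CP-rank test, and the sketch comparison are all $\poly(n)$.

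The main technical obstacle I anticipate is reconciling the continuous Gaussian measurement operator of Theorem~\ref{thm:iso-tensor} with the exact-integer modular arithmetic demanded by the SIS verification step. Remark~\ref{rem:tensor} handles this by rounding each Gaussian entry to a multiple of $1/\poly(n)$, which perturbs the measurement on our polynomially bounded integer stream by only an additive $1/\poly(n)$ that is absorbed by the reconstruction guarantee; the final integer rounding of $X_0$ then restores exact recovery so that check (c) is meaningful. Carefully verifying that this discretization preserves the high-probability uniqueness in Theorem~\ref{thm:iso-tensor}, and that the SIS argument of Lemma~\ref{lem:tensor} is unaffected by the presence of the parallel Gaussian sketch (which is information-theoretically independent of $H$), is the only place where the argument goes beyond the templates of Theorem~\ref{thm:m-rec-1} and Theorem~\ref{thm:rpca-1}.
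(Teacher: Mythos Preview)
Your proposal is correct and follows essentially the same approach as the paper: maintain in parallel an SIS sketch $H\cdot\mathrm{vec}(X)$ for verification via Lemma~\ref{lem:tensor} and the discretized Gaussian measurement of Theorem~\ref{thm:iso-tensor} (per Remark~\ref{rem:tensor}) for fast reconstruction, then accept the rounded candidate $X_0$ only if it is low CP-rank, has bounded integer entries, and matches the SIS sketch. The correctness, space, and time analyses you outline coincide with those in the paper's proof of Theorem~\ref{thm:tensor}.
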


\begin{algorithm}
\caption{Recover-tensor($n_1, \cdots n_d$, $m$, $k$)}
\label{alg:tensor}

\KwData{$m$ integer updates $u_t$ to an $n_1 \times \cdots \times n_d$ tensor with entries bounded in $\beta \in \poly(n)$;
\newline
A modulus $q \in \poly(n)$ with $q \gg \beta$.}

Initiate an instance of the fast low rank tensor recovery scheme $\mathcal{F}(\cdot)$ from Theorem \ref{thm:iso-rs}\;

Let $n = \prod^d_1 n_i$. Let $f(k)$ be a function in $\omega(k)$ and $\Tilde{\mathcal{O}}(k)$. Initialize a uniformly random matrix $H \in \mathbb{Z}^{f(k)(n_1 + \cdots +n_d)\log n \times n}_{q}$ for $q \in \poly(n)$, and a zero vector $v$ of length $f(k)(n_1 + \cdots  +n_d)\log n$\;

\ForEach{update $u_t$ with $t\in [m]$}{
    Feed the update to the initiated instance $\mathcal{F}(\cdot)$\;
    
     Update $v$ by adding $u_t\cdot A_i$ to it, where $H_i$ is the $i^{th}$ column of $H$, and where the stream update changes the $i^{th}$ coordinate by an additive amount $u_t \in \mathbb{Z}_q$\;}

$X^*$ $\gets$ $eval(\mathcal{F}(\cdot))$

\If{$rank(X^*) \leq k$ \textbf{andalso} $X^* \in \mathbb{Z}^{n_1 \times \cdots \times n_d}_{\beta}$ \textbf{andalso} $H\cdot vectorize(X^*)=v\mod q$}{
    \Return $X^*$\;
}

\Return $None$\;

\end{algorithm}

\begin{proof}
    Algorithm \ref{alg:tensor} determines if an $n_1 \times n_2 \times \cdots \times n_d$ input tensor has rank at most $k$ and if so, recovers the input tensor. 
    
    For any input tensor $X \in \mathbb{Z}^{n_1 \times \cdots \times n_d}_\beta$ with $rank(X) \leq k$, by Theorem \ref{thm:iso-tensor}, $eval(\mathcal{F}(\cdot))$ correctly reconstructs it. Also, the product of it with the matrix $H$ matches the sketch $v$. On the other hand, when the input $rank(X) > k$, by Lemma \ref{lem:tensor} under the SIS hardness assumption, there does not exist a tensor $Y$ with $rank(Y) \leq k$ such that $X \neq Y$ and $Hy = v = Hx\mod q$, for $x, y$ being the vectorization of $X, Y$, respectively. Therefore, in this case Algorithm \ref{alg:tensor} outputs $None$, as desired.
    
    The random matrix $H$ used in Algorithm \ref{alg:tensor} can be generated on the fly in the random oracle model. Therefore, the recovery algorithm only stores a sketch vector of length $\tilde{\mathcal{O}}(k(n_1 + \cdots +n_d))$ with entries bounded by $\poly(n)$. Also, as stated in Remark \ref{rem:tensor}, the fast recovery scheme $\mathcal{F}(\cdot)$ takes $\tilde{\mathcal{O}}(k(n_1 + \cdots +n_d))$ bits of space, so the total space usage is $\tilde{\mathcal{O}}(k(n_1 + \cdots +n_d))$. Both the evaluation of the fast recovery scheme $eval(\mathcal{F}(\cdot))$ and the comparison of vectors take $\poly(n)$ time, giving overall $\poly(n)$ time.
\end{proof}

\section{Conclusion}
We give robust streaming algorithms against computationally bounded white-box adversaries under cryptographic assumptions. We design efficient recovery algorithms for vectors, matrices, and tensors which can detect if the input is not sparse or low rank. We use these to improve upon and solve new problems in linear algebra and optimization, such as detecting and finding a maximum matching if it is small. 
It would be interesting to explore schemes that can recover vectors that are only approximately $k$-sparse or matrices that are only approximately rank-$k$. We make progress on the latter by considering robust PCA, but there is much more to be done. Also, although our algorithm improves the space-accuracy trade-off for $\ell_0$-norm estimation, it is unclear if it is optimal, and it would be good to generalize to $\ell_p$ norms for $p > 0$, as well as other statistics of a vector.

\nocite{langley00}

\section*{Acknowledgements}
We thank Aayush Jain for a helpful discussion on pseudorandom functions and the random oracle model. D. Woodruff would like to thank support from the National Institute of Health (NIH) grant 5R01 HG 10798-2 and a Simons Investigator Award. 

\def\shortbib{0}
\bibliographystyle{alpha}
\bibliography{ref}

\newcommand{\etalchar}[1]{$^{#1}$}
\begin{thebibliography}{BEJWY20}

\bibitem[ABD{\etalchar{+}}21]{AlonBDMNY21}
Noga Alon, Omri Ben{-}Eliezer, Yuval Dagan, Shay Moran, Moni Naor, and Eylon
  Yogev.
\newblock Adversarial laws of large numbers and optimal regret in online
  classification.
\newblock In {\em {STOC} '21: 53rd Annual {ACM} {SIGACT} Symposium on Theory of
  Computing}, pages 447--455, 2021.

\bibitem[ABJ{\etalchar{+}}22]{10.1145/3517804.3526228}
Mikl\'{o}s Ajtai, Vladimir Braverman, T.S. Jayram, Sandeep Silwal, Alec Sun,
  David~P. Woodruff, and Samson Zhou.
\newblock The white-box adversarial data stream model.
\newblock In {\em Proceedings of the 41st ACM SIGMOD-SIGACT-SIGAI Symposium on
  Principles of Database Systems}, PODS '22, page 15–27, New York, NY, USA,
  2022. Association for Computing Machinery.

\bibitem[ACSS21]{AttiasCSS21}
Idan Attias, Edith Cohen, Moshe Shechner, and Uri Stemmer.
\newblock A framework for adversarial streaming via differential privacy and
  difference estimators.
\newblock {\em CoRR}, abs/2107.14527, 2021.

\bibitem[ADRSD15]{10.1145/2746539.2746606}
Divesh Aggarwal, Daniel Dadush, Oded Regev, and Noah Stephens-Davidowitz.
\newblock Solving the shortest vector problem in 2n time using discrete
  gaussian sampling: Extended abstract.
\newblock In {\em Proceedings of the Forty-Seventh Annual ACM Symposium on
  Theory of Computing}, STOC '15, page 733–742, New York, NY, USA, 2015.
  Association for Computing Machinery.

\bibitem[AMS96]{10.1145/237814.237823}
Noga Alon, Yossi Matias, and Mario Szegedy.
\newblock The space complexity of approximating the frequency moments.
\newblock In {\em Proceedings of the Twenty-Eighth Annual ACM Symposium on
  Theory of Computing}, STOC '96, page 20–29, New York, NY, USA, 1996.
  Association for Computing Machinery.

\bibitem[AMS99]{AlonMS99}
Noga Alon, Yossi Matias, and Mario Szegedy.
\newblock The space complexity of approximating the frequency moments.
\newblock {\em J. Comput. Syst. Sci.}, 58(1):137--147, 1999.

\bibitem[BCM{\etalchar{+}}13]{BiggioCMNSLGR13}
Battista Biggio, Igino Corona, Davide Maiorca, Blaine Nelson, Nedim Srndic,
  Pavel Laskov, Giorgio Giacinto, and Fabio Roli.
\newblock Evasion attacks against machine learning at test time.
\newblock In {\em Machine Learning and Knowledge Discovery in Databases -
  European Conference, {ECML} {PKDD}, Proceedings, Part {III}}, pages 387--402,
  2013.

\bibitem[BEJWY20]{10.1145/3375395.3387658}
Omri Ben-Eliezer, Rajesh Jayaram, David~P. Woodruff, and Eylon Yogev.
\newblock A framework for adversarially robust streaming algorithms.
\newblock In {\em Proceedings of the 39th ACM SIGMOD-SIGACT-SIGAI Symposium on
  Principles of Database Systems}, PODS'20, page 63–80, New York, NY, USA,
  2020. Association for Computing Machinery.

\bibitem[BEO22]{Ben-EliezerEO22}
Omri Ben{-}Eliezer, Talya Eden, and Krzysztof Onak.
\newblock Adversarially robust streaming via dense-sparse trade-offs.
\newblock In {\em 5th Symposium on Simplicity in Algorithms, {SOSA}}, 2022.

\bibitem[BHM{\etalchar{+}}05]{BrownHMPRS05}
Paul Brown, Peter~J. Haas, Jussi Myllymaki, Hamid Pirahesh, Berthold Reinwald,
  and Yannis Sismanis.
\newblock Toward automated large-scale information integration and discovery.
\newblock In {\em Data Management in a Connected World, Essays Dedicated to
  Hartmut Wedekind on the Occasion of His 70th Birthday}, pages 161--180, 2005.

\bibitem[BHM{\etalchar{+}}21]{BravermanHMSSZ21}
Vladimir Braverman, Avinatan Hassidim, Yossi Matias, Mariano Schain, Sandeep
  Silwal, and Samson Zhou.
\newblock Adversarial robustness of streaming algorithms through importance
  sampling.
\newblock In {\em Advances in Neural Information Processing Systems 34: Annual
  Conference on Neural Information Processing Systems, NeurIPS}, 2021.

\bibitem[BJWY21]{Ben-EliezerJWY21}
Omri Ben{-}Eliezer, Rajesh Jayaram, David~P. Woodruff, and Eylon Yogev.
\newblock A framework for adversarially robust streaming algorithms.
\newblock {\em {SIGMOD} Rec.}, 50(1):6--13, 2021.

\bibitem[BPR11]{prfFromLattice}
Abhishek Banerjee, Chris Peikert, and Alon Rosen.
\newblock Pseudorandom functions and lattices.
\newblock volume 2011, page 401, 01 2011.

\bibitem[BR93]{10.1145/168588.168596}
Mihir Bellare and Phillip Rogaway.
\newblock Random oracles are practical: A paradigm for designing efficient
  protocols.
\newblock In {\em Proceedings of the 1st ACM Conference on Computer and
  Communications Security}, CCS '93, page 62–73, New York, NY, USA, 1993.
  Association for Computing Machinery.

\bibitem[BR96]{10.1007/3-540-68339-9_34}
Mihir Bellare and Phillip Rogaway.
\newblock The exact security of digital signatures-how to sign with rsa and
  rabin.
\newblock In Ueli Maurer, editor, {\em Advances in Cryptology --- EUROCRYPT
  '96}, pages 399--416, Berlin, Heidelberg, 1996. Springer Berlin Heidelberg.

\bibitem[BY20]{Ben-EliezerY20}
Omri Ben{-}Eliezer and Eylon Yogev.
\newblock The adversarial robustness of sampling.
\newblock In {\em Proceedings of the 39th {ACM} {SIGMOD-SIGACT-SIGAI} Symposium
  on Principles of Database Systems, {PODS}}, pages 49--62, 2020.

\bibitem[CC13]{Clifford2013ASS}
Peter Clifford and Ioana~A. Cosma.
\newblock A simple sketching algorithm for entropy estimation over streaming
  data.
\newblock In {\em International Conference on Artificial Intelligence and
  Statistics}, 2013.

\bibitem[CGH04]{10.1145/1008731.1008734}
Ran Canetti, Oded Goldreich, and Shai Halevi.
\newblock The random oracle methodology, revisited.
\newblock {\em J. ACM}, 51(4):557–594, jul 2004.

\bibitem[CGS22]{ChakrabartiGS22}
Amit Chakrabarti, Prantar Ghosh, and Manuel Stoeckl.
\newblock Adversarially robust coloring for graph streams.
\newblock In {\em 13th Innovations in Theoretical Computer Science Conference,
  {ITCS}}, 2022.

\bibitem[CH21]{ChanH21}
Timothy~M. Chan and Qizheng He.
\newblock More dynamic data structures for geometric set cover with sublinear
  update time.
\newblock In {\em 37th International Symposium on Computational Geometry,
  SoCG}, pages 25:1--25:14, 2021.

\bibitem[Cha10]{Chan10}
Timothy~M. Chan.
\newblock A dynamic data structure for 3-d convex hulls and 2-d nearest
  neighbor queries.
\newblock {\em J. {ACM}}, 57(3):16:1--16:15, 2010.

\bibitem[CKL13]{cheung13}
Ho~Yee Cheung, Tsz~Chiu Kwok, and Lap~Chi Lau.
\newblock Fast matrix rank algorithms and applications.
\newblock {\em J. {ACM}}, 60(5):31:1--31:25, 2013.

\bibitem[CLMW11]{10.1145/1970392.1970395}
Emmanuel~J. Cand\`{e}s, Xiaodong Li, Yi~Ma, and John Wright.
\newblock Robust principal component analysis?
\newblock {\em J. ACM}, 58(3), jun 2011.

\bibitem[CSPW11]{doi:10.1137/090761793}
Venkat Chandrasekaran, Sujay Sanghavi, Pablo~A. Parrilo, and Alan~S. Willsky.
\newblock Rank-sparsity incoherence for matrix decomposition.
\newblock {\em SIAM Journal on Optimization}, 21(2):572--596, 2011.

\bibitem[CW08]{CW08}
Emmanuel~J. Cand{\`{e}}s and Michael~B. Wakin.
\newblock An introduction to compressive sampling.
\newblock {\em {IEEE} Signal Process. Mag.}, 25(2):21--30, 2008.

\bibitem[CZM{\etalchar{+}}18]{CubukZMVL18}
Ekin~Dogus Cubuk, Barret Zoph, Dandelion Man{\'{e}}, Vijay Vasudevan, and
  Quoc~V. Le.
\newblock Autoaugment: Learning augmentation policies from data.
\newblock {\em CoRR}, abs/1805.09501, 2018.

\bibitem[DJMS02]{DasuJMS02}
Tamraparni Dasu, Theodore Johnson, S.~Muthukrishnan, and Vladislav Shkapenyuk.
\newblock Mining database structure; or, how to build a data quality browser.
\newblock In {\em Proceedings of the 2002 {ACM} {SIGMOD} International
  Conference on Management of Data}, pages 240--251, 2002.

\bibitem[DSST89]{DriscollSST89}
James~R. Driscoll, Neil Sarnak, Daniel~Dominic Sleator, and Robert~Endre
  Tarjan.
\newblock Making data structures persistent.
\newblock {\em J. Comput. Syst. Sci.}, 38(1):86--124, 1989.

\bibitem[FK03]{FiatK03}
Amos Fiat and Haim Kaplan.
\newblock Making data structures confluently persistent.
\newblock {\em J. Algorithms}, 48(1):16--58, 2003.

\bibitem[GGM86]{10.1145/6490.6503}
Oded Goldreich, Shafi Goldwasser, and Silvio Micali.
\newblock How to construct random functions.
\newblock {\em J. ACM}, 33(4):792–807, aug 1986.

\bibitem[GLM{\etalchar{+}}19]{tensor}
Rachel Grotheer, Shuang Li, Anna Ma, Deanna Needell, and Jing Qin.
\newblock Iterative hard thresholding for low cp-rank tensor models.
\newblock 08 2019.

\bibitem[GM06]{GM06}
Sumit Ganguly and Anirban Majumder.
\newblock Deterministic k-set structure.
\newblock In Stijn Vansummeren, editor, {\em Proceedings of the Twenty-Fifth
  {ACM} {SIGACT-SIGMOD-SIGART} Symposium on Principles of Database Systems,
  June 26-28, 2006, Chicago, Illinois, {USA}}, pages 280--289. {ACM}, 2006.

\bibitem[GSS14]{GoodfellowSS14}
Ian~J. Goodfellow, Jonathon Shlens, and Christian Szegedy.
\newblock Explaining and harnessing adversarial examples.
\newblock {\em CoRR}, abs/1412.6572, 2014.

\bibitem[HKM{\etalchar{+}}20]{HassidimKMMS20}
Avinatan Hassidim, Haim Kaplan, Yishay Mansour, Yossi Matias, and Uri Stemmer.
\newblock Adversarially robust streaming algorithms via differential privacy.
\newblock In {\em Advances in Neural Information Processing Systems 33: Annual
  Conference on Neural Information Processing Systems, NeurIPS}, 2020.

\bibitem[HPG{\etalchar{+}}17]{HuangPGDA17}
Sandy~H. Huang, Nicolas Papernot, Ian~J. Goodfellow, Yan Duan, and Pieter
  Abbeel.
\newblock Adversarial attacks on neural network policies.
\newblock In {\em 5th International Conference on Learning Representations,
  {ICLR}}, 2017.

\bibitem[IEM18]{IlyasEM18}
Andrew Ilyas, Logan Engstrom, and Aleksander Madry.
\newblock Prior convictions: Black-box adversarial attacks with bandits and
  priors.
\newblock {\em CoRR}, abs/1807.07978, 2018.

\bibitem[Jaf11]{DeterministicCompressedSensing}
Sina Jafarpour.
\newblock {\em Deterministic Compressed Sensing}.
\newblock PhD thesis, Princeton University, 2011.

\bibitem[JW23]{10.1145/3596494}
Rajesh Jayaram and David~P. Woodruff.
\newblock Towards optimal moment estimation in streaming and distributed
  models.
\newblock {\em ACM Trans. Algorithms}, may 2023.
\newblock Just Accepted.

\bibitem[Kap04]{Kaplan04}
Haim Kaplan.
\newblock Persistent data structures.
\newblock In {\em Handbook of Data Structures and Applications}. Chapman and
  Hall/CRC, 2004.

\bibitem[Kar16]{10.1145/2710016}
Charles F.~F. Karney.
\newblock Sampling exactly from the normal distribution.
\newblock {\em ACM Trans. Math. Softw.}, 42(1), jan 2016.

\bibitem[KGB17]{KurakinGB17}
Alexey Kurakin, Ian~J. Goodfellow, and Samy Bengio.
\newblock Adversarial machine learning at scale.
\newblock In {\em 5th International Conference on Learning Representations,
  {ICLR}, Conference Track Proceedings}, 2017.

\bibitem[Kim21]{PRFthesis}
Sam~B. Kim.
\newblock {\em Pseudorandom Functions with New Properties from Hard Lattice
  Problems}.
\newblock PhD thesis, 2021.

\bibitem[KM15]{cryptoeprint:2015/140}
Neal Koblitz and Alfred Menezes.
\newblock The random oracle model: A twenty-year retrospective.
\newblock Cryptology ePrint Archive, Paper 2015/140, 2015.
\newblock \url{https://eprint.iacr.org/2015/140}.

\bibitem[KMNS21]{KaplanMNS21}
Haim Kaplan, Yishay Mansour, Kobbi Nissim, and Uri Stemmer.
\newblock Separating adaptive streaming from oblivious streaming using the
  bounded storage model.
\newblock In {\em Advances in Cryptology - {CRYPTO} 2021 - 41st Annual
  International Cryptology Conference, {CRYPTO}, Proceedings, Part {III}},
  pages 94--121, 2021.

\bibitem[Lan00]{langley00}
P.~Langley.
\newblock Crafting papers on machine learning.
\newblock In Pat Langley, editor, {\em Proceedings of the 17th International
  Conference on Machine Learning (ICML 2000)}, pages 1207--1216, Stanford, CA,
  2000. Morgan Kaufmann.

\bibitem[LCLS17]{LiuCLS17}
Yanpei Liu, Xinyun Chen, Chang Liu, and Dawn Song.
\newblock Delving into transferable adversarial examples and black-box attacks.
\newblock In {\em 5th International Conference on Learning Representations,
  {ICLR}, Conference Track Proceedings}, 2017.

\bibitem[MMS{\etalchar{+}}18]{MadryMSTV18}
Aleksander Madry, Aleksandar Makelov, Ludwig Schmidt, Dimitris Tsipras, and
  Adrian Vladu.
\newblock Towards deep learning models resistant to adversarial attacks.
\newblock In {\em 6th International Conference on Learning Representations,
  {ICLR}, Conference Track Proceedings}, 2018.

\bibitem[MN21]{MenuhinN21}
Boaz Menuhin and Moni Naor.
\newblock Keep that card in mind: Card guessing with limited memory.
\newblock {\em CoRR}, abs/2107.03885, 2021.

\bibitem[MP13]{cryptoeprint:2013/069}
Daniele Micciancio and Chris Peikert.
\newblock Hardness of sis and lwe with small parameters.
\newblock Cryptology ePrint Archive, Paper 2013/069, 2013.
\newblock \url{https://eprint.iacr.org/2013/069}.

\bibitem[PBM{\etalchar{+}}03]{PadmanabhanBMCH03}
Sriram Padmanabhan, Bishwaranjan Bhattacharjee, Timothy Malkemus, Leslie
  Cranston, and Matthew Huras.
\newblock Multi-dimensional clustering: {A} new data layout scheme in {DB2}.
\newblock In {\em Proceedings of the 2003 {ACM} {SIGMOD} International
  Conference on Management of Data}, pages 637--641, 2003.

\bibitem[RFP10]{doi:10.1137/070697835}
Benjamin Recht, Maryam Fazel, and Pablo~A. Parrilo.
\newblock Guaranteed minimum-rank solutions of linear matrix equations via
  nuclear norm minimization.
\newblock {\em SIAM Review}, 52(3):471--501, 2010.

\bibitem[RSW22]{RoghaniSW22}
Mohammad Roghani, Amin Saberi, and David Wajc.
\newblock Beating the folklore algorithm for dynamic matching.
\newblock In {\em 13th Innovations in Theoretical Computer Science Conference,
  {ITCS}}, 2022.

\bibitem[SAC{\etalchar{+}}79]{SelingerACLP79}
Patricia~G. Selinger, Morton~M. Astrahan, Donald~D. Chamberlin, Raymond~A.
  Lorie, and Thomas~G. Price.
\newblock Access path selection in a relational database management system.
\newblock In {\em Proceedings of the 1979 {ACM} {SIGMOD} International
  Conference on Management of Data}, pages 23--34. {ACM}, 1979.

\bibitem[SDNR96]{ShuklaDNR96}
Amit Shukla, Prasad Deshpande, Jeffrey~F. Naughton, and Karthikeyan Ramasamy.
\newblock Storage estimation for multidimensional aggregates in the presence of
  hierarchies.
\newblock In {\em VLDB'96, Proceedings of 22th International Conference on Very
  Large Data Bases}, pages 522--531, 1996.

\bibitem[SST{\etalchar{+}}18]{SchmidtSTTM18}
Ludwig Schmidt, Shibani Santurkar, Dimitris Tsipras, Kunal Talwar, and
  Aleksander Madry.
\newblock Adversarially robust generalization requires more data.
\newblock In {\em Advances in Neural Information Processing Systems 31: Annual
  Conference on Neural Information Processing Systems 2018, NeurIPS.}, pages
  5019--5031, 2018.

\bibitem[SZS{\etalchar{+}}14]{SzegedyZSBEGF14}
Christian Szegedy, Wojciech Zaremba, Ilya Sutskever, Joan Bruna, Dumitru Erhan,
  Ian~J. Goodfellow, and Rob Fergus.
\newblock Intriguing properties of neural networks.
\newblock {\em International Conference on Learning Representations}, 2014.

\bibitem[TKP{\etalchar{+}}18]{TramerKPGBM18}
Florian Tram{\`{e}}r, Alexey Kurakin, Nicolas Papernot, Ian~J. Goodfellow, Dan
  Boneh, and Patrick~D. McDaniel.
\newblock Ensemble adversarial training: Attacks and defenses.
\newblock In {\em 6th International Conference on Learning Representations,
  {ICLR}, Conference Track Proceedings}, 2018.

\bibitem[TV20]{Tanner2020CompressedSO}
Jared Tanner and Simon Vary.
\newblock Compressed sensing of low-rank plus sparse matrices.
\newblock {\em ArXiv}, abs/2007.09457, 2020.

\bibitem[Waj20]{Wajc20}
David Wajc.
\newblock Rounding dynamic matchings against an adaptive adversary.
\newblock In {\em Proccedings of the 52nd Annual {ACM} {SIGACT} Symposium on
  Theory of Computing, {STOC}}, pages 194--207, 2020.

\bibitem[WZ21]{WoodruffZ21}
David~P. Woodruff and Samson Zhou.
\newblock Tight bounds for adversarially robust streams and sliding windows via
  difference estimators.
\newblock In {\em 62nd {IEEE} Annual Symposium on Foundations of Computer
  Science, {FOCS}}, pages 1183--1196, 2021.

\end{thebibliography}

\end{document}